\title{Building Population-Informed Priors for Bayesian Inference Using Data-Consistent Stochastic Inversion}
\author{Rebekah D.~White~\thanks{Sandia National Laboratories, Scientific Machine Learning Department, rebwhit@sandia.gov},
\and John D.~Jakeman~\thanks{Sandia National Laboratories, Optimization and UQ Department, jdjakem@sandia.gov},
\and Tim Wildey~\thanks{Sandia National Laboratories, Computational Mathematics Department,
tmwilde@sandia.gov},
\and Troy Butler~\thanks{University of Colorado Denver, Department of Mathematical and Statistical Sciences, troy.butler@ucdenver.edu}}
\begin{document}
\maketitle

\newcommand{\slugmaster}{%
\slugger{juq}{xxxx}{xx}{x}{x--x}}

\begin{abstract}
Bayesian inference provides a powerful tool for leveraging observational data to inform model predictions and uncertainties.
However, when such data is limited, Bayesian inference may not adequately constrain uncertainty without the use of highly informative priors. 
Common approaches for constructing informative priors 
typically rely on either assumptions or knowledge of the underlying physics, which is not always available.
In this work, we consider the scenario where data are available on a population of assets/individuals, which occurs in many problem domains such as biomedical or digital twin applications, and leverage this population-level data to systematically constrain the Bayesian prior and subsequently improve individualized inferences.
The approach proposed in this paper is based upon a recently developed technique known as data-consistent inversion (DCI) for constructing a pullback probability measure.
Succinctly, we utilize DCI to build population-informed priors for subsequent Bayesian inference on individuals. 
While the approach is general and applies to nonlinear maps and arbitrary priors, we prove that for linear inverse problems with Gaussian priors, the population-informed prior produces an increase in the information gain as measured by the determinant and trace of the inverse posterior covariance. 
We also demonstrate that the Kullback-Leibler divergence often improves with high probability.
Numerical results, including linear-Gaussian examples and one inspired by digital twins for additively manufactured assets, indicate that there is significant value in using these population-informed priors.
\end{abstract}

\section{Introduction}\label{sec:Intro_new}

A key objective of modern computational science and engineering is to use observational data to inform computational models in pursuit of data-informed,
physics-based predictions with quantified uncertainties.
Bayesian inference provides a powerful tool for leveraging observational data and prior knowledge to inform model uncertainties.
The solution to the Bayesian inverse problem is known as the \textit{posterior (probability) distribution}, which is a conditional density quantifying epistemic (reducible) uncertainty in the model parameters given noisy observational data~\cite{Jaynes_IEEETSSC_1968,stuart_2010, Tarantola}.
However, a key challenge in applying Bayesian inference to practical problems is appropriately specifying a prior probability measure/density that quantifies one's prior knowledge without being overly restrictive~\cite{carlin_bayes_2000}.
Often, an approach based on the `principle of insufficient reason,' can be used to specify an uninformative prior (e.g., a uniform density), so that the posterior distribution is primarily influenced by the data~\cite{gelman_bayesian_2013, price_uninformative_2002,van_dongen_prior_2006}.
However, such an approach is inadequate when insufficient data exist to substantially inform the posterior density, motivating the use of the highly informative priors.

%

In this work, we consider problems where the prior is intended to represent the inherent variability (a type of irreducible, i.e., aleatoric, uncertainty) in a population of possible parameter values from which a parameter of interest has been drawn~\cite{gelman_bayesian_2013}. 
The objective is to construct a population-informed prior to facilitate Bayesian inference for the parameters associated with individual-level data.
For example, in biomedical applications where invasive tests on individuals are required, leveraging previous test data on a related population of patients can help constrain or inform Bayesian inference for an individual patient.
Similarly, data from destructive tests on a population of additively manufactured components can be utilized when performing inference on an individual component. 
In such scenarios, the population-level information can provide more objective and informative prior knowledge regarding the individual of interest compared to priors formed using a subjective state of knowledge.

Using population data to construct a prior requires the ability to characterize uncertainty on data obtained from the population. 
Hierarchical Bayesian inference~\cite{gelman_bayesian_2013} and empirical Bayes~\cite{carlin_bayes_2000, berger_statistical_1985, petrone_empirical_2014} are two potential approaches in which prior probability distributions are estimated from the observational data.
In other words, they represent alternative methods for incorporating population-level information into Bayesian inference. 
Empirical Bayes aims to leverage the observational data to inform both the prior and inference parameters.
This is often done through a parametric specification of the prior whose hyperparameters are inferred through marginal maximum likelihood estimation~\cite{berger_statistical_1985, petrone_empirical_2014}.
Given sufficient data, nonparametric approaches can be employed that estimate the empirical cumulative density function (cdf) of the prior from the empirical cdf of the data~\cite{Robbins_1956}.
We note that marginal likelihood estimation, especially for a likelihood depending on computationally intensive models (such as differential equations), poses a non-trivial optimization problem.
Hierarchical Bayesian inference represents a similar, but fully Bayesian approach, wherein hyperpriors on the hyperparameters are specified and informed from data.
This provides a more rigorous but often more computationally complex approach to data-driven prior specification~\cite{petrone_empirical_2014}.  
Indeed, the computational complexity of hierarchical and empirical Bayesian approaches increases with the dimension of the space of hyperparameters, making data-driven inference of priors challenging for large-scale problems.
In addition, such parametric approaches restrict the resulting posteriors to the parametric family defining the prior, which may be unsuitable for certain applications.
%


%

In this work, we propose using data-consistent inversion (DCI)~\cite{wildey_2018} to estimate population-informed priors that constrain Bayesian inverse problems.
DCI is a nonparametric framework that can be used to estimate an \textit{updated (probability) distribution}, which represents the inherent population variability given observational data on the population and an initial distribution on the population-level parameters.
The updated density, derived and analyzed in \cite{wildey_2018}, is a pullback of the observed distribution, i.e., the push-forward of the updated density through the computational model matches the distribution of the observed population data.
While there exist alternative methods to solve such stochastic inverse problems, such as those based on measure-theoretic principles~\cite{breidt_measure-theoretic_2011} as well as more recent work utilizing gradient flows to approximate solutions~\cite{li2024stochasticinverseproblemstability}, we leverage DCI as it has several attractive features.
Namely, DCI guarantees existence and uniqueness (up to the choice of initial density) as well as stability with respect to perturbations in the various distributions utilized in its construction.
Moreover, it has been shown that updated densities converge when using sequences of converging approximate models~\cite{butler2018convergence,butler2022p} or density approximations~\cite{butler2025stability}, and the approach extends easily to non-deterministic models~\cite{butler2020data}, matching empirical cdfs~\cite{bergstrom2024distributions}, or incorporating multiple models of differing fidelity~\cite{bruder2020data}.

%

\subsection{Contributions}
This work presents a novel approach that utilizes DCI to estimate population-informed priors from population data to better constrain Bayesian inference on an individual.
This combination of inverse problems is motivated by the observation that standard Bayesian inference updates the posterior only in directions informed by the individual data, 
while leveraging DCI allows one to update directions informed by the population-level data.
The benefit in combining these approaches is clear when these directions are complementary (or even orthogonal), but we demonstrate that this approach is also beneficial in reducing uncertainty when these directions are aligned, which can occur, for instance, if both the population and individual data come from the same type of measurement.
Moreover, the framework we present applies to linear and nonlinear models as well as to non-Gaussian prior/posterior probability measures.

While the framework applies to general models and measures, we provide a thorough theoretical study of the linear-Gaussian case, which produces closed-form analytical expressions for the inverse problem solutions and information gain.  
These analytical formulations provide tremendous insight on the impact of using population data to better inform uncertain model parameters.
We then move beyond the linear-Gaussian case and demonstrate the effectiveness of this approach using a computational model for an additively-manufactured ``dog-bone'' structures.
Through these examples, we show that using DCI to estimate population-informed priors for Bayesian inference improves the quantification of uncertainty in the individual parameters and often increases the information gained relative to Bayesian inference with standard, uninformed prior specification.
Overall, this work provides a unique approach for incorporating observational data from a population into individualized assessments.  
The main contributions of this work are summarized as follows: 
\begin{itemize}
    \item A novel combination of two types of inverse techniques, DCI and Bayesian inference, leveraging population- and individual-level data to enhance inferences on individuals.
    \item Theoretical results for the linear-Gaussian case that show the model parameter uncertainty is reduced when using the population-informed prior in comparison to standard Bayesian inference.
    \item Numerical demonstrations using analytical and computational mechanics models (inspired by digital twin applications) to demonstrate the effectiveness of this approach.
\end{itemize}

\subsection{Outline}

The remainder of the paper is organized as follows:
\Cref{sec:notation_prob_form} summarizes the  notation and terminology necessary to describe the Bayesian and data-consistent inverse approaches.
\Cref{sec:pi_priors} presents the conceptual and algorithmic framework for constructing the population-informed prior/posterior through the combination of inverse techniques.
Theoretical analysis of the linear-Gaussian case is also included in~\Cref{sec:pi_priors}.
Computational results are presented in~\Cref{sec:computational_results} for both the linear-Gaussian case as well as a nonlinear model.
Conclusions are drawn in \Cref{sec:conclusions}.

\section{Summary of Bayesian and Data-Consistent Inversion}
\label{sec:notation_prob_form}

To properly define the forward and inverse problems considered in this paper, we begin with some notation used throughout this work.
First, let $(\Lambda, \mathcal{B}_{\Lambda}, \mu_{\Lambda})$ denote the measure space associated with model parameters while $(\mathcal{D}_p, \mathcal{B}_{\mathcal{D}_p}, \mu_{\mathcal{D}_p})$ and $(\mathcal{D}_i, \mathcal{B}_{\mathcal{D}_i}, \mu_{\mathcal{D}_i})$ denote measure spaces of data at the population- and individual-levels associated with model observables, respectively. 
In this work, we assume $\Lambda \subset \mathbb{R}^{\pdim}$, $\mathcal{D}_p\subset\mathbb{R}^{\ddim_p}$, and $\mathcal{D}_i\subset\mathbb{R}^{\ddim_i}$ for positive integers $\pdim$, $\ddim_p$, and $\ddim_i$.
We further assume that $\mathcal{B}_{\Lambda}$, $\mathcal{B}_{\mathcal{D}_p}$, and $\mathcal{B}_{\mathcal{D}_i}$ are the Borel $\sigma$-algebras inherited from the respective metric topologies and that $\mu_{\Lambda}$, $\mu_{\mathcal{D}_p}$ and $\mu_{\mathcal{D}_i}$ are the associated volume (typically Lebesgue) measures.

The mappings between the various spaces play a critical role. 
Let $\qmap:\pspace\to\mathcal{D}_p$ denote the map that takes parameter values into the space of observables that we can measure over the population, which is assumed to be measurable and piecewise smooth.
Similarly, let $\imap:\pspace \rightarrow \dspace_i$ denote the map that takes parameter values into the space of observables that we can measure for each individual of the population, which is also assumed to be measurable and piecewise smooth. 
In this work, we explore the following distinct cases for population informed inference: (i) the population and individual maps are identical, i.e., $\qmap(\param) = \imap(\param)$, and (ii) the more typical case where these maps differ.
Note that differing maps could, for example, represent scenarios in which the same underlying experiment is performed, but the population-level observed QoI differ from the individual QoI. 

\subsection{Bayesian Inverse Problem Formulation}
\label{sec:bayes_form}
In this section, we describe the standard Bayesian inference approach that leverages data on an individual to estimate epistemic uncertainty. 
The standard Bayesian formulation often begins by specifying a prior probability measure $\priormeas$ on $(\Lambda, \mathcal{B}_{\Lambda})$.
When this measure is absolutely continuous with respect to $\mu_\pspace$, it admits a corresponding probability density function (PDF) $\priordens$. 
The prior is an initial quantification of uncertainty in one's knowledge or belief of any particular fixed value of the parameter vector being the true value before data are collected on the individual.
Subsequently, Bayes' rule is leveraged to weigh these prior beliefs along with data collected on the individual, denoted by $\data\in\mathcal{D}_i$, to estimate the posterior distribution on the model parameters given as  
\begin{equation}\label{eq:stand_bayes}
\postdens(\param) = \priordens(\param) \frac{\likedens(\data | \param)}{C},
\end{equation}
where $\likedens(\data | \param)$ is a given (data-)likelihood function and
\[C  =\int_\pspace \likedens(\data | \param) \ d\priormeas\]
is a normalizing constant often referred to as the evidence.
The posterior distribution can be interpreted as providing the relative likelihood that a given parameter value could have produced the observed data on the individual.

In many applications of interest, data on the individual may be limited due to the potential high-cost or inability to measure many individual data values (thus limiting the size of $m_i$).
Furthermore, even when a large amount of individual data are available (i.e., $m_i$ is large), an ill-conditioned data map $f_i$ (often determined by analyzing the singular values of its Jacobian) can lead to high-correlation in the data, which reduces the effective dimension of the data space and necessitates the need for informative priors to constrain the Bayesian inverse problem. 
Even in situations where the model and reality coincide,  \cite{stuart_2010} notes that for the under-determined case, ``the posterior measure converges to a Gaussian measure whose support lies on a hyperplane embedded in the space where the unknown [parameter] lies.'' 
Additionally, \cite{Bochkina2012TheBM} 
states that ill-posed problems may violate the assumptions of the the famous Bernstein-von Mises theorem, which is often used in frequentist arguments to justify the approximation of the posterior by a Gaussian distribution and prove convergence to a Dirac delta in the limit of infinite data.
Such convergence results are often interpreted as implying the epistemic uncertainty vanishes (although this argument generally requires the reality and model to agree near the unknown value).
At a high-level, the remarks of \cite{Bochkina2012TheBM,stuart_2010} indicate that for practical problems, data may not inform all directions of the uncertain parameter space, even in the limit of infinite data and with perfect agreement between the model and reality.
Such an issue is further exacerbated by the relatively slow ${\cal O}(1/\sqrt{m_i})$ rate of convergence when standard Monte Carlo schemes are utilized.
For these reasons, informative priors have the potential to greatly improve Bayesian inference. 

\subsection{Data-Consistent Inversion}
\label{sec:DCI_form}

The DCI formulation seeks to constrain the aleatoric (irreducible) uncertainty of data on a population using the pullback of an observed probability measure $\obsmeas$ defined on $(\dspace_p, \mathcal{B}_{\dspace_p})$ using $f_p$. 
In other words, given an observed probability measure, $\obsmeas$, on $(\mathcal{D}_p, \mathcal{B}_{\mathcal{D}_p})$, DCI seeks a probability measure $\mathbb{P}_\pspace$ on $(\Lambda, \mathcal{B}_{\Lambda})$ such that the push-forward of $\mathbb{P}_\pspace$ through $\qmap$ matches the observed measure, i.e.,
\begin{equation}
\obsmeas(A) = \mathbb{P}_\pspace(\qmap^{-1}(A)) \quad \forall A \in \mathcal{B}_{\mathcal{D}_p}.\label{eq:consistent}
\end{equation}

Since pullback measures are, in general, not unique, we follow~\cite{wildey_2018} to construct a DCI solution by first defining
an {\em initial} probability measure $\initmeas$ on $(\Lambda, \mathcal{B}_{\Lambda})$, which we seek to update in such a way that a pullback is uniquely defined with respect to this initial measure.
Note that the role of the initial probability measure is fundamentally different than the prior measure in a Bayesian setting since it represents an initial quantification of aleatoric uncertainty in the model parameters across the population.
Given an initial probability measure, a forward uncertainty quantification problem is solved to construct a predicted probability measure on the population-level data space $(\dspace_p, \mathcal{B}_{\dspace_p})$ given by
\begin{equation*}
    \predmeas(A) := \initmeas(f_p^{-1}(A)),  \quad \forall A \in \mathcal{B}_{\dspace_p}.
\end{equation*}
Clearly, if $\predmeas$ matches $\obsmeas$, then $\initmeas$ defines a DCI solution; we do not expect this to happen in practice.
Instead, an update to $\initmeas$ is possible under reasonable assumptions. 
We first assume that the initial, predicted, and observed measures are all absolutely continuous with respect to their respective volume measures so that they admit corresponding PDFs, $\initdens$, $\preddens$, and $\obsdens$, respectively.
To guarantee the {\em existence} of a solution, we follow~\cite{wildey_2018}, 
and make the following {\em predictability assumption} regarding these densities:
\begin{assumption}[Predictability Assumption]\label{assump:dom}
\normalfont
There exists a constant $C_p>0$ such that
\[\obsdens(q)\leq C_p\preddens(q), \quad \text{for a.e.} \hspace{0.1cm} q\in\dspace_p.\]
\end{assumption}

Assumption~\ref{assump:dom} implies that the observed probability measure is absolutely
continuous with respect to the push-forward of the initial probability measure.
This assumption is also related to the ability to construct numerical approximations of the updated density defining the DCI solution since the constant $C_p$ is the same constant utilized/estimated when performing rejection sampling.
Under the predictability assumption, a disintegration theorem is used to give a unique update to the initial probability measure that defines a DCI solution as
\begin{equation}\label{eq:upmeas}
\upmeas(A) := \int_{\mathcal{D}}\Big(\int_{A\cap \qmap^{-1}(d)} \initdens(\param)\frac{\obsdens(\qmap(\param))}{\preddens(\qmap(\param))}d \mu_{\Lambda,d}(\param)\Big)d\mu_{\mathcal{D}_p}(d).
\end{equation}
Adopting the convention established in~\cite{bruder2020data,butler2018convergence}, we refer to this probability measure as the {\em updated} measure on $(\Lambda, \mathcal{B}_{\Lambda})$, and its associated updated density is given by
\begin{eqnarray}\label{eq:upd_dens}
\updens(\param) = \initdens(\param)r(\param), \quad \text{where} \quad r(\param) =\frac{\obsdens(\qmap(\param))}{\preddens(\qmap(\param))}.
\end{eqnarray}

In practice, the ratio $r(\param)$ updates the initial density to construct a solution to the stochastic inverse problem.  
The sample average of the ratio (computed from samples generated by the initial distribution) also provides a useful diagnostic for numerical validation of the predictability assumption since the expected value corresponds to integrating the updated density and should therefore be unity.
In other words, if the predictability assumption holds, then the updated density is in fact a density implying that
\begin{equation}\label{eq:E_init_r}
    1 = \int_\pspace \updens(\lambda)\, d\mu_\pspace = \int_\pspace r(\lambda)\initdens(\lambda)\, d\mu_\pspace = \mathbb{E}_\text{init}(r(\lambda)).
\end{equation}

While, in general, $\initdens$ and $\priordens$ may be different, we take them to be the same throughout this paper for the sake of simplicity and making fair comparisons.
However, it is worth noting that in~\cite{wildey_2018} the initial and updated measures are referred to as prior and posterior measures to reflect the fact that~\eqref{eq:upmeas} is derived by combining the disintegration theorem (e.g., see~\cite{Bogachev_vol1,chang_1997}) and Bayes' theorem.
As with the works that chronologically follow \cite{wildey_2018} (e.g., see \cite{bruder2020data,butler2020optimal,butler2020data,butler2022p}), this work uses {\em initial/updated} instead of {\em prior/posterior} to emphasize the fact that the inverse problem defined by~\eqref{eq:consistent} is fundamentally different from the classical Bayesian inverse problem.
In particular, in the limit of infinite population data, the characterization of the observed density improves, and therefore the DCI solution also improves, but does not converge to a Dirac delta.
We refer the interested reader to~\cite{wildey_2018} for a formal derivation of~\eqref{eq:upmeas}, the theory of existence, uniqueness, and stability of $\upmeas$, and also a direct comparison of $\updens$ with the solution to a classical Bayesian inverse problem.

\subsection{An Illustrative Example of Data-Consistent Inversion}
\label{sec:vis_DCI}
To visualize how DCI leverages information regarding a population to construct a pullback probability measure, consider the following mathematical model of a population 
\begin{eqnarray}\label{eq:pop_mod_0_1}
    f_p({\bm \lambda}) = \begin{bmatrix}
        0 & 1 
    \end{bmatrix} \begin{bmatrix}
        \lambda_1 \\ \lambda_2
    \end{bmatrix},
\end{eqnarray}
where ${\bm \lambda} = [\lambda_1, \lambda_2]$ are the uncertain parameters we wish to infer from population-level data.
Additionally, assume that the \textit{true} population-generating probability density on these uncertain parameters is 
\begin{eqnarray}\label{eq:pop_gen_0_1}
    \popgendens(\param) \sim \mathcal{N}\left(\begin{bmatrix}
    0.2 \\ 0.3
\end{bmatrix}, \begin{bmatrix}
    0.06 & 0 \\ 0 & 0.06
\end{bmatrix}
\right).
\end{eqnarray}
This true density, shown in the left plot of~\Cref{fig:tri_plot}, generates the observed density on the population data given as $\obsdens \sim \mathcal{N}\left(0.3, 0.06\right)$.
In practice, one would not have access to this true population-generating distribution, but utilizing it in this illustrative example allows us to demonstrate how population-level data informs certain directions of this true distribution due to the ill-posedness of the inverse problem.

Using the initial density
\begin{eqnarray}\label{eq:init_0_1}
    \initdens(\param) \sim \mathcal{N}\left(\begin{bmatrix}
    0.4 \\ 0.0
\end{bmatrix}, \begin{bmatrix}
    0.15 & 0 \\ 0 & 0.15
\end{bmatrix}
\right),
\end{eqnarray}
which is depicted in the middle plot of~\Cref{fig:tri_plot},
 and the population model given in~\eqref{eq:pop_mod_0_1}, yields the updated density 
\begin{eqnarray}\label{eq:up_dens_0_1}
    \updens(\param) \sim \mathcal{N}\left(\begin{bmatrix}
    0.4 \\ 0.3
\end{bmatrix}, \begin{bmatrix}
    0.15 & 0 \\ 0 & 0.06
\end{bmatrix}
\right)
\end{eqnarray}
shown in the right plot of~\Cref{fig:tri_plot}.
\begin{figure}[h]
    \centering
    \includegraphics[trim={0 0 0 1cm},clip,width=0.96\textwidth]{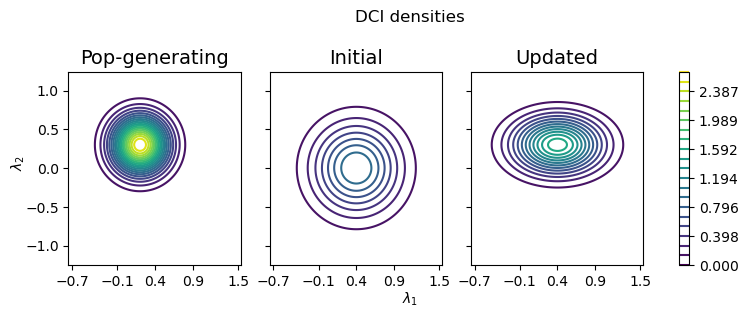}
    \caption{The \textit{true} population-generating density $\popgendens$ (left), the initial density $\initdens(\param)$ (middle), and the updated density $\updens(\param)$ (right).}
    \label{fig:tri_plot}
\end{figure}

The push-forward of this updated density through the population model defined by~\eqref{eq:pop_mod_0_1} is compared to the push-forward of the initial density in \Cref{fig:pf_comp}.
\begin{figure}[h]
    \centering
    \includegraphics[width=0.48\textwidth]{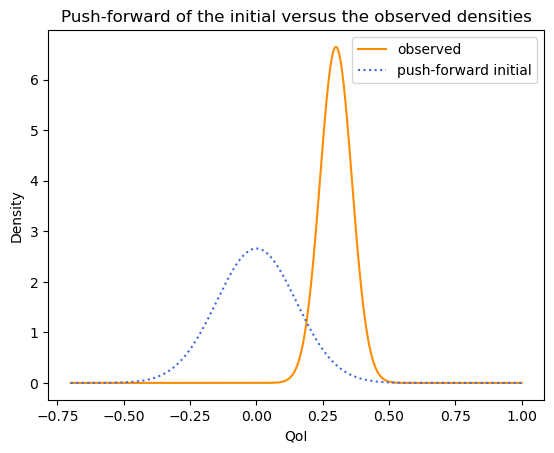}
    \includegraphics[width=0.48\textwidth]{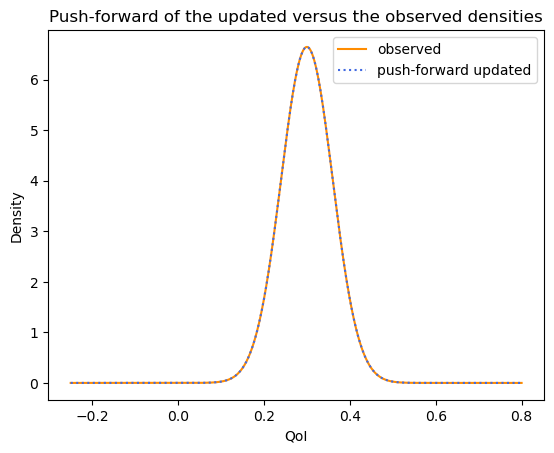}
    \caption{Comparison of the push-forwards (through the QoI map $f_p({\bm \lambda})$) of the initial density $\initdens$ (left) versus the updated density (right).}
    \label{fig:pf_comp}
\end{figure}
From this figure, it is clear that the updated density $\updens(\param)$ is precisely the density that \textit{updates} the initial such that this updated density is \textit{consistent} with the observed density. %
It is worth emphasizing that while the updated density does solve the inverse problem in the sense that it is a pullback probability density, the updated density will often not be identical to the true population-generating distribution (unless the map $f_p(\param)$ is one-to-one).
For instance, the population model given in~\eqref{eq:pop_mod_0_1} only informs the $\lambda_2$ direction, and consequently, the initial density is updated only in the $\lambda_2$ direction as seen in~\eqref{eq:up_dens_0_1} and~\Cref{fig:tri_plot}.

\subsection{Practical Estimation and Sampling of Solutions}

Before providing the algorithmic framework for integrating Bayesian and DCI methods, we end this current section with a high-level summary of typical options available for the practical estimation and sampling of individual posterior and updated distributions.
Methods such as Markov Chain Monte Carlo (MCMC) are often utilized to sample from the posterior distribution in Bayesian frameworks to avoid estimation of the normalizing constant (i.e., the evidence term) as this typically requires many model evaluations.
This is especially useful when the support of the likelihood is a relatively small subset of the support of the prior.
However, for the sake of simplicity,
we generate samples from the prior and use standard Monte Carlo estimation to approximate the normalizing constant and other values necessary for computing quantities such as the Kullback-Leibler (KL) divergence as shown in ~\Cref{sec:dci_up}.
Nevertheless, the algorithmic framework and examples presented in this work are amenable to approaches that do not estimate such constants. 

The approximation and sampling of the updated density within the DCI framework requires the estimation of the predicted density. In contrast to the denominator of the Bayesian posterior density, which is constant, the predicted density appearing in the denominator of the updated density (as seen in~\eqref{eq:upd_dens}) is a function of the model parameters. In this paper we use kernel density estimation (KDE) to estimate the predicted density; this is tractable as the dimensionality of the KDE is determined by the number of quantities of interest and not the number of model parameters.
Once the predicted density is approximated, it is possible to sample from an updated density via MCMC methods. However, as mentioned above, to keep the methods simple and straightforward to implement and reproduce, we follow the Monte Carlo approach used in~\cite{wildey_2018,butler2020optimal,butler2020data, tran2021solving} based on rejection sampling to draw independent identically distributed samples from the updated density.

\section{Using DCI to Build Population-Informed Priors}
\label{sec:pi_priors}

Throughout this work, we refer to solving the Bayesian inverse problem using a population-informed prior as \textit{population-informed inference} and the corresponding solution as the \textit{population-informed posterior}.
In contrast, the Bayesian posterior utilizing standard prior specification, as in~\eqref{eq:stand_bayes}, will be referred to as the \textit{standard posterior} computed according to \textit{standard Bayesian inference}.

To construct a population-informed posterior, we leverage the fact that DCI solves a fundamentally different problem than Bayesian inference and utilize the updated density as a population-informed prior. 
In other words, we first use DCI to compute $\updens(\param)$ as the population-informed prior, which subsequently defines the corresponding population-informed posterior as 
\begin{eqnarray}\label{eq:pop_post}
    \postupdens(\param) := \updens(\param)\frac{\likedens(\data | \param)}{\tilde{C}} = \initdens(\param)\frac{\obsdens(\qmap(\param))}{\preddens(\qmap(\param))}\frac{\likedens(\data | \param)}{\tilde{C}},
\end{eqnarray}
where, by direct substitution and re-arrangement of terms, the population-informed evidence denoted by $\tilde{C}$ is written as 
\begin{eqnarray}\label{eq:tilde_c}
    \tilde{C}  := \int_\pspace \frac{\obsdens(\qmap(\param))}{\preddens(\qmap(\param))} \likedens(\data | \param) \initdens(\param) \, d\mu_\pspace = \mathbb{E}_\text{init}\left[r(\param) \likedens(\data | \param)\right].
\end{eqnarray}
From~\eqref{eq:pop_post}, it is evident that once the predicted density is constructed for a given observed density (on the population-level data) and likelihood (on the individual-level data), we have all the ingredients necessary to sample from the population-informed posterior given samples from the initial density. 
\Cref{alg:alg_pop_post} summarizes the rejection sampling procedure utilized in this work to generate independent identically distributed samples from the population-informed posterior. 
It is worth noting that the initial samples utilized for approximating both the predicted density and the population-informed evidence term are reused in the rejection sampling step.
\begin{algorithm}[h]
    \DontPrintSemicolon
    \SetAlgoLined
    \KwInput{$\{\param^{(j)} \}_{j=1}^{N} \sim \initdens$, individual data ${\bm y}$, a population-level observed density $\obsdens$, and an individual-level likelihood $\likedens$.}
    \textbf{Pre-processing steps:} \\
    \begin{enumerate}
        \item Use a KDE on $\{f_p^{(j)} := f_p(\param^{(j)})\}_{j=1}^N$ to estimate $\preddens$. 
        \item Check if the predictability assumption holds using the diagnostic of~\eqref{eq:E_init_r} to verify
        \begin{eqnarray*}
            1 = \mathbb{E}_\text{init}(r(\param)) \approx \frac{1}{N}\sum_{j=1}^N \frac{\obsdens(f_p^{(j)})}{\preddens(f_p^{(j)})}. 
        \end{eqnarray*}
        \item Approximate the population-informed evidence of~\eqref{eq:tilde_c} as
        \begin{eqnarray*}
            \tilde{C} = \mathbb{E}_\text{init}\left[r(\param) \likedens(\data | \param)\right] \approx \frac{1}{N} \sum_{j = 1}^{N} \frac{\obsdens(f_p^{(j)})}{\preddens(f_p^{(j)})} \likedens(\data | \param^{(j)}).
        \end{eqnarray*}
        \item Let $\alpha(\param):=\frac{r(\param)\likedens(\data | \param^{(j)})}{\tilde{C}}$ and compute $\{ \alpha^{(j)} :=\alpha(\param^{(j)}) \}_{j=1}^{N}$.
        \item Estimate $M = \max_{\param} \alpha(\param) \approx \max_{1\leq j\leq n} \alpha^{(j)}$
    \end{enumerate}
    \textbf{Rejection sampling:} \\
    Set $S = 0$\;
    \For{ $j = 1, \dots, N$}
    {Generate a random number $t^{(i)} \sim \mathcal{U}([0,1])$ and compute $\eta^{(j)} = \alpha^{(j)}/M$. \;
    \eIf{$t^{(j)} < \eta^{(j)}$} 
    {Accept and set $S=S+1$ and $\param_{\text{accept}}^{(j)}=\param^{(j)}$.}
    {Reject $\param^{(j)}$.}
    }
    \KwOutput{Samples from the population-informed posterior $\{ \param_{\text{accept}}^{(j)} \}_{j =1}^{S} \sim \postupdens$.} 
    \caption{Drawing samples from the population-informed posterior $\postupdens$}
     \label{alg:alg_pop_post}
\end{algorithm}

The remainder of this section is mostly dedicated to the analysis of the linear-Gaussian case.
We first compare the population-informed posterior~\eqref{eq:pop_post} to the standard Bayesian posterior~\eqref{eq:stand_bayes} for the case where both the population and individual parameter-to-observable maps are linear and the likelihood, prior, and initial densities are Gaussian.
Such assumptions result in closed-form expressions for the updated and posterior densities, which allow us to prove that by metrics such as the trace and determinant of the inverse covariance, population-informed inference is more informative than standard Bayesian inference.
Finally, we compare the information gain, as quantified by the KL divergence, that arises from the update of the prior to the standard and population-informed posteriors.  
We demonstrate that in general (nonlinear and non-Gaussian), the expressions for these KL divergences are quite similar.
However, even in the linear-Gaussian case, we explain why population-informed inference may not result in greater information gained as measured by the KL divergence than the standard Bayesian inference.

\subsection{The Linear-Gaussian Updated Density}
\label{sec:dci_up} 
In this subsection, we state the basic assumptions for the use and analysis of DCI for linear-Gaussian models.
Assume that the parameter-to-population-observable map is given as $f_p(\bml) = {\bm A}\bml$ for some full-rank ${\bm A} \in \mathbb{R}^{\ddim_p\times\pdim}$, the population-level observed density is given as $\obsdens \sim \mathcal{N}(\bar{\bm f_p}, {\bm \Gamma}_{\text{obs}})$ (where $\bar{\bm f_p}$ denotes the observed mean on the QoI), and the initial density is given as $\initdens(\bml) \sim \mathcal{N}({\bm \mu}_{\text{in}}, {\bm \Gamma}_{\text{in}})$.
Note the use of subscripts on the hyperparameters defining the means and covariances are done for both clarity and notational simplicity since we often refer to the inverses of the covariance matrices. 
Also note that the covariances are of different shapes given the different dimensions of the spaces, i.e., ${\bm \Gamma}_{\text{obs}}\in\mathbb{R}^{\ddim_p\times\ddim_p}$ and ${\bm \Gamma}_{\text{in}}\in\mathbb{R}^{\pdim\times\pdim}$.
In Section~\ref{sec:proof_linGaus}, we discuss sufficient conditions on both the covariances and the map $f_p$ that guarantee the predictability assumption holds. 
For now, assuming that the predictability assumption holds, the resulting updated density is Gaussian (e.g., see see~\cite{marvin2018,PILOSOV2023115906} for derivations) and written as $\updens(\bml) \sim \mathcal{N}({\bm \mu}_{\text{up}}, {\bm \Gamma}_{\text{up}})$ (with ${\bm \Gamma}_{\text{up}}\in\mathbb{R}^{\pdim\times\pdim}$) where  
\begin{eqnarray}\label{eq:cov_up}
    {\bm \Gamma}_{\text{up}} &=&  \left( {\bm A}^\top {\bm \Gamma}_{\text{obs}}^{-1}{\bm A} + {\bm \Gamma}_{\text{in}}^{-1} - {\bm A}^\top\left({\bm A}{\bm \Gamma}_{\text{in}}{\bm A}^\top \right)^{-1}{\bm A}
    \right)^{-1},\\\label{eq:mu_up}
    {\bm \mu}_{\text{up}} &=& {\bm \mu}_{\text{in}} + {\bm \Gamma}_{\text{up}} {\bm A}^\top {\bm \Gamma}_{\text{obs}}^{-1}\left(\bar{\bm f_p}-{\bm A}{\bm \mu}_{\text{in}}\right).
\end{eqnarray}

\subsection{The Linear-Gaussian Standard and Population-Informed Posteriors}
\label{sec:bays_post}

A closed-form expression for the standard Bayesian posterior in the linear-Gaussian case is well-known (e.g., see~\cite{stuart_2010}).
We focus this section on deriving the population-informed posterior for the linear-Gaussian case where the parameter-to-individual-observable map is given by $f_i(\bml) = {\bm B}\bml$ for some full-rank ${\bm B}\in\mathbb{R}^{\ddim_i\times \pdim}$.
In cases where the same experiment and measurements are obtained for both the individual- and population-level data, ${\bm B} = {\bm A}$; otherwise, we expect that ${\bm B} \neq {\bm A}$.

To provide closed-form expressions for the individual and population informed posteriors, assume $\priordens(\bml) \sim \mathcal{N}({\bm m}_{\text{pr}}, {\bm \Gamma}_{\text{pr}})$
and that the individual data satisfy 
\begin{eqnarray*}
    {\bm y} = f_i(\bml) + {\bm \eta} = {\bm B}\bml + {\bm \eta}, 
\end{eqnarray*}
where ${\bm \eta} \sim \mathcal{N}\left({\bm 0}, {\bm \Gamma_{\text{noise}}}\right)$ and ${\bm \Gamma_{\text{noise}}} \in \mathbb{R}^{\ddim_i \times \ddim_i}$.
Given this Gaussian noise assumption, the likelihood is given as $\pi_{\text{lik}}\left({\bm y} | \bml\right) \sim \mathcal{N}\left({\bm B}\bml, {\bm \Gamma}_{\text{noise}}\right)$, and the resulting
standard Bayesian posterior is Gaussian, i.e., $\postdens(\bml) \sim \mathcal{N}\left({\bm m}_{\text{post}}, {\bm \Gamma}_{\text{post}}\right)$, where
\begin{eqnarray}\label{eq:linGaus_cov}
    {\bm \Gamma}_{\text{post}} &=& \left({\bm B}^\top {\bm \Gamma}_{\text{noise}}^{-1} {\bm B} + {\bm \Gamma}_{\text{pr}}^{-1}\right)^{-1},\\\label{eq:linGaus_mean}
    {\bm m}_{\text{post}} &=& {\bm m}_{\text{pr}} + {\bm \Gamma}_{\text{post}}{\bm B}^\top {\bm \Gamma}_{\text{noise}}^{-1} \left({\bm y} - {\bm B}{\bm m}_{\text{pr}}\right) .
\end{eqnarray}
From~\Cref{sec:dci_up}, it follows that the population-informed posterior is also Gaussian.
However, in contrast to the standard posterior, we have that $ \postupdens(\bml) \sim \mathcal{N}(\tilde{\bm m}_{\text{post}}, \tilde{\bm \Gamma}_{\text{post}})$, where~\eqref{eq:cov_up} produces the following population-informed posterior covariance
\begin{eqnarray}\label{eq:linGaus_pop_cov}
    \tilde{\bm \Gamma}_{\text{post}} &=& \left({\bm B}^\top {\bm \Gamma}_{\text{noise}}^{-1} {\bm B} + {\bm \Gamma}_{\text{up}}^{-1}\right)^{-1}\\
    \label{eq:linGaus_pop_cov_long}
        &=& \left({\bm B}^\top {\bm \Gamma}_{\text{noise}}^{-1} {\bm B} + \left( {\bm A}^\top {\bm \Gamma}_{\text{obs}}^{-1}{\bm A} + {\bm \Gamma}_{\text{in}}^{-1} - {\bm A}^\top\left({\bm A}{\bm \Gamma}_{\text{in}}{\bm A}^\top \right)^{-1}{\bm A}\right)\right)^{-1},
\end{eqnarray}
and~\eqref{eq:mu_up} produces the following population-informed posterior mean
\begin{eqnarray}
    \label{eq:linGaus_pop_mean}
    \tilde{\bm m}_{\text{post}} &=& {\bm \mu}_{\text{up}} + \tilde{\bm \Gamma}_{\text{post}}{\bm B}^\top {\bm \Gamma}_{\text{noise}}^{-1} \left({\bm y} - {\bm B}{\bm \mu}_{\text{up}}\right) \\
    \label{eq:linGaus_pop_mean_long}
        &=& {\bm \mu}_{\text{up}} + \tilde{\bm \Gamma}_{\text{post}}{\bm B}^\top {\bm \Gamma}_{\text{noise}}^{-1} \left({\bm y} - {\bm B}\left({\bm \mu}_{\text{in}} + {\bm \Gamma}_{\text{up}} {\bm A}^\top {\bm \Gamma}_{\text{obs}}^{-1}\left(\bar{\bm f_p}-{\bm A}{\bm \mu}_{\text{in}}\right)\right)\right).
\end{eqnarray}

\subsection{Impact of Population-Informed Prior on Posterior Covariance}
\label{sec:proof_linGaus}

A variety of metrics involving the posterior distribution are available to us to compare the effectiveness of incorporating population data into the Bayesian inverse problem.
For example, in standard Bayesian inference, the KL divergence is used to quantify the information gain by utilizing observational data to generate the posterior distribution.
Significant deviations of the posterior from the prior are quantified by large KL divergences and interpreted as significant information being gained by solving the Bayesian inverse problem.
Alternatively, scalar-valued functions of the posterior covariance can indicate a reduction in parameter uncertainty or analogously an increase in information gain from solving the inverse problem. 
Common scalar-valued functions include the trace and determinant or the covariance (or its inverse, which is referred to as the precision matrix).
Such criteria are often used in optimal experimental design and correspond to D-optimal (i.e., minimizing the determinant of the covariance)~\cite{Alexanderian_S_SISC_2018,Attia_2018} and A-optimal (i.e., minimizing the trace of the covariance)~\cite{doi:10.1137/140992564,Attia_2018,Haber2012} designs, respectively.

For the linear-Gaussian setting, we leverage the closed-form expressions given in~\Cref{sec:bays_post} to prove that population-informed posteriors always produces an equivalent or increased
information gain as measured in terms of the trace and determinant of the inverse posterior covariance matrix.
To see this, we first show that under reasonable assumptions, the inverse covariance of the updated density is symmetric positive definite (SPD). 
Following a similar approach to that given in~\cite{spantini_2015} and later developed for DCI in~\cite{marvin2018}, we define ${\bm Q} = {\bm \Gamma}_{\text{in}}^{1/2}{\bm A}^\top{\bm \Gamma}_{\text{obs}}^{-1/2} \in\mathbb{R}^{\pdim \times \ddim_p}$ whose reduced singular value decomposition (SVD) is given as 
\begin{eqnarray*}
    {\bm Q} = {\bm U}{\bm \Sigma}{\bm V}^\top,
\end{eqnarray*}
where ${\bm U} \in \mathbb{R}^{\pdim \times \ddim_p}$, ${\bm V} \in \mathbb{R}^{\ddim_p \times \ddim_p}$, and ${\bm \Sigma} \in \mathbb{R}^{\ddim_p \times \ddim_p}$ is a diagonal matrix whose entries are the singular values of ${\bm Q}$.

\begin{prop}\label{prop:spd}
\normalfont
If $\qmap$ is linear, ${\bm \Sigma}_{ii} \geq 1$ for all $1\leq i\leq \ddim_p$ (i.e., the singular values of ${\bm Q}$ are all bounded below by unity), and both $\priordens$ and $\obsdens$ are Gaussian with $\initdens(\bml) \sim \mathcal{N}({\bm \mu}_{\text{in}}, {\bm \Gamma}_{\text{in}})$ and $\obsdens \sim \mathcal{N}(\bar{{\bm f}_p}, {\bm \Gamma}_{\text{obs}})$, then ${\bm \Gamma}_{\text{up}}^{-1}$ is symmetric positive definite (SPD).
\end{prop}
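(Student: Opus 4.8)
The plan is to establish the two defining properties of an SPD matrix---symmetry and positive definiteness---separately, with the latter carrying essentially all of the work. Symmetry of ${\bm \Gamma}_{\text{up}}^{-1} = {\bm A}^\top {\bm \Gamma}_{\text{obs}}^{-1}{\bm A} + {\bm \Gamma}_{\text{in}}^{-1} - {\bm A}^\top({\bm A}{\bm \Gamma}_{\text{in}}{\bm A}^\top)^{-1}{\bm A}$ is immediate, since each of the three summands is manifestly symmetric (the first and third have the form ${\bm M}^\top {\bm S} {\bm M}$ with ${\bm S}$ symmetric, and the middle term is the symmetric ${\bm \Gamma}_{\text{in}}^{-1}$). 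For positive definiteness I would exploit that congruence by the invertible matrix ${\bm \Gamma}_{\text{in}}^{1/2}$ preserves inertia, so it suffices to show that the whitened matrix ${\bm \Gamma}_{\text{in}}^{1/2}{\bm \Gamma}_{\text{up}}^{-1}{\bm \Gamma}_{\text{in}}^{1/2}$ is SPD. This whitening is precisely the change of coordinates that makes the SVD of ${\bm Q} = {\bm \Gamma}_{\text{in}}^{1/2}{\bm A}^\top{\bm \Gamma}_{\text{obs}}^{-1/2}$ the natural basis.

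Next I would evaluate the three whitened terms in closed form via ${\bm Q} = {\bm U}{\bm \Sigma}{\bm V}^\top$. The first term satisfies ${\bm \Gamma}_{\text{in}}^{1/2}{\bm A}^\top{\bm \Gamma}_{\text{obs}}^{-1}{\bm A}{\bm \Gamma}_{\text{in}}^{1/2} = {\bm Q}{\bm Q}^\top = {\bm U}{\bm \Sigma}^2{\bm U}^\top$, using ${\bm V}^\top{\bm V} = {\bm I}$. The middle term ${\bm \Gamma}_{\text{in}}^{1/2}{\bm \Gamma}_{\text{in}}^{-1}{\bm \Gamma}_{\text{in}}^{1/2}$ is simply ${\bm I}$. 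For the third term, setting ${\bm W} := {\bm \Gamma}_{\text{in}}^{1/2}{\bm A}^\top$ so that ${\bm A}{\bm \Gamma}_{\text{in}}{\bm A}^\top = {\bm W}^\top{\bm W}$, the whitened expression becomes ${\bm W}({\bm W}^\top{\bm W})^{-1}{\bm W}^\top$, the orthogonal projector onto $\mathrm{range}({\bm W})$. Because ${\bm Q} = {\bm W}{\bm \Gamma}_{\text{obs}}^{-1/2}$ with ${\bm \Gamma}_{\text{obs}}^{-1/2}$ invertible, we have $\mathrm{range}({\bm W}) = \mathrm{range}({\bm Q}) = \mathrm{range}({\bm U})$, so this projector equals ${\bm U}{\bm U}^\top$.

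Combining the three pieces, the whitened inverse covariance collapses to ${\bm \Gamma}_{\text{in}}^{1/2}{\bm \Gamma}_{\text{up}}^{-1}{\bm \Gamma}_{\text{in}}^{1/2} = {\bm I} + {\bm U}({\bm \Sigma}^2 - {\bm I}){\bm U}^\top$. The hypothesis ${\bm \Sigma}_{ii} \ge 1$ gives ${\bm \Sigma}^2 - {\bm I} \succeq 0$, hence ${\bm U}({\bm \Sigma}^2 - {\bm I}){\bm U}^\top \succeq 0$ and the whole expression is $\succeq {\bm I} \succ 0$. Reversing the congruence then yields that ${\bm \Gamma}_{\text{up}}^{-1}$ is SPD (indeed ${\bm \Gamma}_{\text{up}}^{-1} \succeq {\bm \Gamma}_{\text{in}}^{-1}$, which foreshadows the information-gain comparisons later in this section).

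The main obstacle, I expect, is the careful justification that the third term reduces to the clean projector ${\bm U}{\bm U}^\top$. This needs (i) recognizing the projection structure ${\bm W}({\bm W}^\top{\bm W})^{-1}{\bm W}^\top$, which requires ${\bm W}^\top{\bm W}$ to be invertible, and (ii) the range identity $\mathrm{range}({\bm W}) = \mathrm{range}({\bm U})$. Both rely on ${\bm A}$ being full rank together with the invertibility of ${\bm \Gamma}_{\text{in}}^{1/2}$ and ${\bm \Gamma}_{\text{obs}}^{1/2}$, which guarantee that ${\bm Q}$ has full column rank $\ddim_p$ so that ${\bm U}$ carries $\ddim_p$ orthonormal columns spanning the common range. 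I would also remark that positive definiteness alone needs only ${\bm \Sigma}_{ii} > 0$; the stronger hypothesis ${\bm \Sigma}_{ii} \ge 1$ is what secures the monotone bound $\succeq {\bm I}$ and connects to the predictability assumption discussed elsewhere in this section.
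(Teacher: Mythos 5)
Your proof is correct and takes essentially the same route as the paper: whitening by the congruence with ${\bm \Gamma}_{\text{in}}^{1/2}$ and using the SVD of ${\bm Q}$ to arrive at the identity ${\bm \Gamma}_{\text{up}}^{-1} = {\bm \Gamma}_{\text{in}}^{-1/2}\bigl(\mathbb{I}_\pdim + {\bm U}({\bm \Sigma}^2 - \mathbb{I}_{\ddim_p}){\bm U}^\top\bigr){\bm \Gamma}_{\text{in}}^{-1/2}$, then invoking ${\bm \Sigma}_{ii}\geq 1$ to conclude positive definiteness. The only difference is that you explicitly derive that identity (via the orthogonal projector ${\bm W}({\bm W}^\top{\bm W})^{-1}{\bm W}^\top = {\bm U}{\bm U}^\top$ and the range argument), whereas the paper states it directly, deferring the derivation to cited references.
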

\begin{proof}
Using the updated covariance given in~\eqref{eq:cov_up} along with the SVD of ${\bm Q}$, we have that 
\begin{eqnarray}\nonumber
    {\bm \Gamma}_{\text{up}}^{-1} &=& {\bm A}^\top {\bm \Gamma}_{\text{obs}}^{-1}{\bm A} + {\bm \Gamma}_{\text{in}}^{-1} - {\bm A}^\top\left({\bm A}{\bm \Gamma}_{\text{in}}{\bm A}^\top \right)^{-1}{\bm A} \\\nonumber
    &=&
    {\bm \Gamma}_{\text{in}}^{-1/2}\left(\mathbb{I}_\pdim + {\bm U}\left({\bm \Sigma^2} - \mathbb{I}_{\ddim_p}\right){\bm U}^\top\right){\bm \Gamma}_{\text{in}}^{-1/2}
    \\\label{eq:update_prec}
    &=& {\bm \Gamma}_{\text{in}}^{-1} + {\bm \Gamma}_{\text{in}}^{-1/2} {\bm U}{\bm G}{\bm U}^\top{\bm \Gamma}_{\text{in}}^{-1/2}
\end{eqnarray}
where 
$\mathbb{I}_{x} \in \mathbb{R}^{x \times x}$ represents an identity matrix, and ${\bm G} = \left({\bm \Sigma^2} - \mathbb{I}_{\ddim_p}\right)$ is a diagonal matrix whose entries are given by ${\bm G}_{ii} = {\bm \Sigma}^2_{ii} - 1$.
Note that if ${\bm \Sigma}_{ii} = 1$ for all $i$, then the covariance of the updated density is the covariance of the initial distribution.
If ${\bm \Sigma}_{ii}<1$ for any $1\leq i\leq \ddim_p$, then this indicates a violation of the predictability assumption.
Assuming ${\bm \Sigma}_{ii}\geq 1$, we have that ${\bm G}_{ii}\geq 0$.
Therefore, ${\bm \Gamma}_{\text{in}}^{1/2} {\bm U}{\bm G} {\bm U}^\top {\bm \Gamma}_{\text{in}}^{1/2}$ is positive semi-definite and we can write~\eqref{eq:update_prec} generally as 
\begin{eqnarray}\label{eq:prec_spd}
    {\bm \Gamma}_{\text{up}}^{-1} = 
        {\bm \Gamma}_{\text{in}}^{-1} + {\bm Y}{\bm Y}^\top,
\end{eqnarray}
where ${\bm Y} := {\bm \Gamma}_{\text{in}}^{1/2} {\bm U}{\bm G}^{1/2} \in \mathbb{R}^{\pdim \times \ddim_p}$.
Since ${\bm \Gamma}_{\text{in}}^{-1}$ is SPD and ${\bm Y}{\bm Y}^\top$ is positive semi-definite, it follows that ${\bm \Gamma}_{\text{up}}^{-1}$ is also SPD.
%
%
\end{proof}

Next, we show that~\eqref{eq:prec_spd} allows us to prove that the determinant of the population-informed posterior covariance is always less than or equal to the determinant of the posterior covariance for standard Bayesian inference when the initial and prior distributions have matching covariances, which implies that incorporating population-level data into the Bayesian inverse problem results in greater (or equivalent) information gain than the utilization of individual-level data alone.
\begin{prop}\label{prop:det}
\normalfont
If the assumptions in Proposition~\ref{prop:spd} are satisfied and ${\bm \Gamma}_\text{pr} = {\bm \Gamma_\text{in}}$, then
\begin{equation}\label{eq:proof_det_cov}
     \det({\bm \Gamma}^{-1}_{\text{post}}) \leq \det(\tilde{\bm \Gamma}^{-1}_{\text{post}}).
\end{equation}

\end{prop}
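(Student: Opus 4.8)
The plan is to show that, under the stated hypotheses, the inverse population-informed posterior covariance equals the inverse standard posterior covariance plus a positive semi-definite perturbation, and then to invoke monotonicity of the determinant with respect to the Loewner order. First I would substitute the closed-form expressions~\eqref{eq:linGaus_cov} and~\eqref{eq:linGaus_pop_cov} for the two inverse posterior covariances, giving
\[
    {\bm \Gamma}^{-1}_{\text{post}} = {\bm B}^\top {\bm \Gamma}_{\text{noise}}^{-1} {\bm B} + {\bm \Gamma}_{\text{pr}}^{-1},
    \qquad
    \tilde{\bm \Gamma}^{-1}_{\text{post}} = {\bm B}^\top {\bm \Gamma}_{\text{noise}}^{-1} {\bm B} + {\bm \Gamma}_{\text{up}}^{-1}.
\]
The only difference between the two expressions is the prior/updated precision term. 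Using the hypothesis ${\bm \Gamma}_{\text{pr}} = {\bm \Gamma}_{\text{in}}$ together with the key identity~\eqref{eq:prec_spd} established in Proposition~\ref{prop:spd}, namely ${\bm \Gamma}_{\text{up}}^{-1} = {\bm \Gamma}_{\text{in}}^{-1} + {\bm Y}{\bm Y}^\top$, I would substitute to obtain the clean additive decomposition
\[
    \tilde{\bm \Gamma}^{-1}_{\text{post}} = {\bm \Gamma}^{-1}_{\text{post}} + {\bm Y}{\bm Y}^\top,
\]
where ${\bm Y}{\bm Y}^\top$ is positive semi-definite by construction.

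Having reduced the claim to comparing $\det({\bm \Gamma}^{-1}_{\text{post}})$ with $\det({\bm \Gamma}^{-1}_{\text{post}} + {\bm Y}{\bm Y}^\top)$, I would next argue that ${\bm \Gamma}^{-1}_{\text{post}}$ is itself SPD: it is the sum of the positive semi-definite term ${\bm B}^\top {\bm \Gamma}_{\text{noise}}^{-1} {\bm B}$ and the SPD prior precision ${\bm \Gamma}_{\text{pr}}^{-1}$, hence invertible with an SPD square root. The core step is then the standard determinant-monotonicity argument: factoring out ${\bm \Gamma}^{-1}_{\text{post}}$ via its SPD square root gives
\[
    \det\!\left({\bm \Gamma}^{-1}_{\text{post}} + {\bm Y}{\bm Y}^\top\right)
    = \det\!\left({\bm \Gamma}^{-1}_{\text{post}}\right)\,
      \det\!\left(\mathbb{I}_\pdim + {\bm \Gamma}^{1/2}_{\text{post}}{\bm Y}{\bm Y}^\top{\bm \Gamma}^{1/2}_{\text{post}}\right),
\]
and since ${\bm \Gamma}^{1/2}_{\text{post}}{\bm Y}{\bm Y}^\top{\bm \Gamma}^{1/2}_{\text{post}}$ is positive semi-definite (a congruence of a PSD matrix), each of its eigenvalues is nonnegative, so the second determinant is a product of factors of the form $1 + \lambda_k \geq 1$ and is therefore at least one. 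This yields~\eqref{eq:proof_det_cov} directly.

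I expect the substantive content of the proof to be almost entirely contained in Proposition~\ref{prop:spd}: once the identity~\eqref{eq:prec_spd} is in hand, the remaining steps are routine. The main point requiring care is verifying that the base matrix ${\bm \Gamma}^{-1}_{\text{post}}$ is genuinely SPD (so that its square root and the factorization above are well defined) rather than merely positive semi-definite, which is where the SPD prior precision is used. If one wishes to avoid the square-root factorization, an equivalent route is to diagonalize via the generalized eigenvalue problem for the pair $({\bm Y}{\bm Y}^\top, {\bm \Gamma}^{-1}_{\text{post}})$, but the congruence argument above is the most transparent. I would also note that equality in~\eqref{eq:proof_det_cov} holds precisely when ${\bm Y}{\bm Y}^\top = {\bm 0}$, i.e.\ when all singular values of ${\bm Q}$ equal unity and the population data are uninformative, consistent with the remarks following Proposition~\ref{prop:spd}.
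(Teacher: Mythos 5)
Your proof is correct, and its first half coincides exactly with the paper's: both substitute the closed forms \eqref{eq:linGaus_cov} and \eqref{eq:linGaus_pop_cov}, invoke the decomposition \eqref{eq:prec_spd} from Proposition~\ref{prop:spd} together with the hypothesis ${\bm \Gamma}_{\text{pr}} = {\bm \Gamma}_{\text{in}}$, and reduce the claim to comparing $\det\left({\bm \Gamma}^{-1}_{\text{post}}\right)$ with $\det\left({\bm \Gamma}^{-1}_{\text{post}} + {\bm Y}{\bm Y}^\top\right)$. Where you genuinely diverge is the concluding inequality. The paper finishes by citing Minkowski's determinant theorem, $\det(A+B) \geq \det(A) + \det(B)$, together with $\det({\bm Y}{\bm Y}^\top) = \det({\bm Y})^2 \geq 0$ (a statement that itself is only meaningful when ${\bm Y}$ is square; in general one just needs $\det({\bm Y}{\bm Y}^\top)\geq 0$ from positive semi-definiteness). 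You instead factor through the SPD square root of ${\bm \Gamma}^{-1}_{\text{post}}$ and observe that $\det\left(\mathbb{I}_\pdim + {\bm \Gamma}^{1/2}_{\text{post}}{\bm Y}{\bm Y}^\top{\bm \Gamma}^{1/2}_{\text{post}}\right) = \prod_k (1+\lambda_k) \geq 1$ since the congruence is positive semi-definite. Both finishes are valid, but yours buys two things: it is self-contained (only multiplicativity of the determinant and nonnegativity of eigenvalues, no appeal to Minkowski, whose semi-definite version is what is actually needed here since ${\bm Y}{\bm Y}^\top$ is typically rank-deficient when $\ddim_p < \pdim$, despite the paper labeling it SPD), and it yields the equality characterization -- equality in \eqref{eq:proof_det_cov} iff ${\bm Y}={\bm 0}$, i.e., all singular values of ${\bm Q}$ equal unity -- essentially for free, which the Minkowski route does not give as directly. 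Your explicit verification that ${\bm \Gamma}^{-1}_{\text{post}}$ is genuinely SPD (sum of a PSD term and the SPD prior precision) is the same fact the paper relies on implicitly.
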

\begin{proof}
Consider that
\begin{subequations}\label{eq:det_proof_1}
    \begin{align}
    \det\left(\tilde{\bm \Gamma}^{-1}_{\text{post}} \right) &= \det \left({\bm B}^\top {\bm \Gamma}_{\text{noise}}^{-1} {\bm B} + {\bm \Gamma}_{\text{up}}^{-1}\right) 
    \\\label{eq:det_proof_2}
    &= \det \left({\bm B}^\top {\bm \Gamma}_{\text{noise}}^{-1} {\bm B} + {\bm \Gamma}_{\text{in}}^{-1} + {\bm Y}{\bm Y}^\top\right)
    \\\label{eq:det_proof_3}
    &=\det \left({\bm \Gamma}^{-1}_{\text{post}} + {\bm Y}{\bm Y}^\top\right)
    \\\label{eq:det_proof_4}
    &\geq \det \left({\bm \Gamma}^{-1}_{\text{post}}\right) + \det \left({\bm Y}{\bm Y}^\top\right)
    \\\label{eq:det_proof_5}
    &\geq \det\left({\bm \Gamma}^{-1}_{\text{post}} \right),
    \end{align}
\end{subequations}
where~\eqref{eq:det_proof_2} follows from expression~\eqref{eq:prec_spd}, \eqref{eq:det_proof_3} follows from definition~\eqref{eq:linGaus_cov}, \eqref{eq:det_proof_4} follows from Minkowski’s determinant theorem and the fact that ${\bm \Gamma}^{-1}_{\text{post}}$ and ${\bm Y}{\bm Y}^\top$ are both SPD; finally,~\eqref{eq:det_proof_5} results from the fact $\det({\bm Y}{\bm Y}^\top) = \det({\bm Y})^2 \geq 0$.
%
The result in~\eqref{eq:proof_det_cov} is then an immediate consequence of the above chain of inequalities. 
\end{proof}

%
As with the determinant, the trace of the inverse population-informed posterior covariance will always be greater than or equal to the trace of the standard posterior covariance, indicating that average parameter precision (inverse variance) improves with the utilization of population-level data.
\begin{prop}\label{prop:trace}
\normalfont
If the assumptions in Proposition~\ref{prop:spd} are satisfied and ${\bm \Gamma}_\text{pr}={\bm \Gamma}_\text{init}$, then
\begin{equation}\label{eq:trace_result}
    \text{trace}\left({\bm \Gamma}^{-1}_{\text{post}} \right) \leq \text{trace}\left(\tilde{\bm \Gamma}^{-1}_{\text{post}} \right).
\end{equation}
\end{prop}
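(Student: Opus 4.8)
The plan is to exploit the same additive decomposition of the inverse covariance that drove the determinant bound in Proposition~\ref{prop:det}, but to take advantage of the fact that the trace is linear rather than multiplicative, which makes the argument considerably shorter. First I would recall from \eqref{eq:prec_spd} together with the definition \eqref{eq:linGaus_cov} and the hypothesis that the prior and initial covariances match, which yields the exact decomposition of the population-informed posterior precision
\begin{equation*}
    \tilde{\bm \Gamma}^{-1}_{\text{post}} = {\bm \Gamma}^{-1}_{\text{post}} + {\bm Y}{\bm Y}^\top,
\end{equation*}
where ${\bm Y} = {\bm \Gamma}_{\text{in}}^{1/2}{\bm U}{\bm G}^{1/2}$ is precisely the matrix constructed in the proof of Proposition~\ref{prop:spd}. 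This identity is already available from the earlier propositions, so no new algebra is needed to obtain it.

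Next I would apply the additivity of the trace to this identity to write $\text{trace}(\tilde{\bm \Gamma}^{-1}_{\text{post}}) = \text{trace}({\bm \Gamma}^{-1}_{\text{post}}) + \text{trace}({\bm Y}{\bm Y}^\top)$. The only remaining point is to confirm that the correction term is nonnegative, which follows immediately because ${\bm Y}{\bm Y}^\top$ is positive semi-definite (established in proving Proposition~\ref{prop:spd}); its trace therefore equals the sum of its nonnegative eigenvalues, or equivalently the squared Frobenius norm $\text{trace}({\bm Y}{\bm Y}^\top) = \|{\bm Y}\|_F^2 \geq 0$. Substituting this inequality delivers \eqref{eq:trace_result} at once.

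Unlike the determinant case, I do not anticipate any genuine obstacle here, since the trace splits cleanly over the sum with no interaction term, so neither Minkowski's inequality nor any further positive-definiteness of ${\bm \Gamma}^{-1}_{\text{post}}$ is required. The one point worth emphasizing is that the decomposition \eqref{eq:prec_spd} depends on the matching of the prior and initial covariances: this is what makes ${\bm \Gamma}^{-1}_{\text{post}}$ and $\tilde{\bm \Gamma}^{-1}_{\text{post}}$ share the common ${\bm B}^\top{\bm \Gamma}_{\text{noise}}^{-1}{\bm B}$ term so that they differ only by the semidefinite rank-correction ${\bm Y}{\bm Y}^\top$. Without this hypothesis the cancellation fails and the inequality in \eqref{eq:trace_result} could not be asserted in this form, so I would state that assumption explicitly at the outset of the proof.
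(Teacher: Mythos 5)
Your proof is correct and follows essentially the same route as the paper: both rest on the decomposition \eqref{eq:prec_spd}, the linearity of the trace, and the non-negativity of $\text{trace}({\bm Y}{\bm Y}^\top)$ (your Frobenius-norm observation is the paper's sum-of-squared-entries argument). The only difference is organizational---the paper first bounds $\text{trace}({\bm \Gamma}_{\text{up}}^{-1})$ below by $\text{trace}({\bm \Gamma}_{\text{pr}}^{-1})$ and then adds the data term, whereas you assemble the single identity $\tilde{\bm \Gamma}^{-1}_{\text{post}} = {\bm \Gamma}^{-1}_{\text{post}} + {\bm Y}{\bm Y}^\top$ before taking traces---which is a cosmetic rearrangement, not a different argument.
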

\begin{proof}
Consider first that~\eqref{eq:prec_spd}
provides that 
\begin{eqnarray*}
    \text{trace}\left({\bm \Gamma}_{\text{up}}^{-1}\right) = 
    \text{trace}\left({\bm \Gamma}_{\text{pr}}^{-1}\right) + \text{trace}\left({\bm Y}{\bm Y}^\top\right) \geq \text{trace}\left({\bm \Gamma}_{\text{pr}}^{-1}\right),
\end{eqnarray*}
since $\text{trace}({\bm Y}{\bm Y}^\top) = \sum_{i = 1}^{\pdim}\sum_{j =1}^\pdim ({\bm Y}_{ij})^2 \geq 0$.
Therefore, 
\begin{subequations}\label{eq:trace_proof_1}
    \begin{align}
    \text{trace}\left({\bm \Gamma}^{-1}_{\text{post}} \right) 
    &= 
    \text{trace}\left({\bm B}^\top {\bm \Gamma}_{\text{noise}}^{-1} {\bm B} + {\bm \Gamma}_{\text{pr}}^{-1}\right) 
    \\\label{eq:trace_proof_2}
    &= 
    \text{trace}\left({\bm B}^\top {\bm \Gamma}_{\text{noise}}^{-1} {\bm B}\right) + \text{trace}\left({\bm \Gamma}_{\text{pr}}^{-1}\right)\\\label{eq:trace_proof_3}
    &\leq 
    \text{trace}\left({\bm B}^\top {\bm \Gamma}_{\text{noise}}^{-1} {\bm B}\right) + \text{trace}\left({\bm \Gamma}_{\text{up}}^{-1}\right)
    \\\label{eq:trace_proof_4}
    &= 
    \text{trace}\left({\bm B}^\top {\bm \Gamma}_{\text{noise}}^{-1} {\bm B}+ {\bm \Gamma}_{\text{up}}^{-1}\right)
    \\\label{eq:trace_proof_5}
    &=
    \text{trace}\left(\tilde{\bm \Gamma}^{-1}_{\text{post}} \right),
    \end{align}
\end{subequations}
which proves~\eqref{eq:trace_result}.
\end{proof}

Note that the results given by~\eqref{eq:proof_det_cov} and~\eqref{eq:trace_result} hold regardless of whether the population and individual models differ. 
However, they do require that both the individual and population parameter-to-observable maps are linear, the prior and likelihood are Gaussian, and that the prior covariance used in the standard Bayesian inference is equivalent to the initial covariance used in the DCI problem to construct the population-informed prior.

\subsection{Impact of Population-Informed Prior on Information Gain}\label{subsec:infogain}
It is quite common, especially in the context of optimal experimental design, to utilize the KL divergence of the posterior from the prior
to quantify the information gained.
Given two probability densities, $\pi^A$ and $\pi^B$, on $(\pspace,\pborel)$ with $\pi^A$ absolutely continuous with respect to $\pi^B$, the KL divergence of $\pi^A$ from $\pi^B$ is given by
\begin{equation}\label{eq:KL}
\text{KL}(\pi^A || \pi^B) = \int_{\pspace} \pi^A(\param) \log\left( \frac{\pi^A(\param)}{\pi^B(\param)}\right)\ d \mu_\pspace.
\end{equation}
When using the population-informed prior for Bayesian inference, we can express the information gained from the initial distribution to the population-informed posterior as,
\begin{eqnarray}\nonumber
    \text{KL}(\postupdens || \priordens) &=& \int_{\pspace} \frac{1}{\tilde{C}}r(\param)\likedens({\bm y} | \param) \priordens(\param) \log\left( \frac{\frac{1}{\tilde{C}}r(\param)\likedens({\bm y} | \param)\priordens(\param)}{\priordens(\param)}\right) \ d \mu_{\pspace},\\\label{eq:kl_pop_post_pri}
    &=& \int_{\pspace} \frac{1}{\tilde{C}}r(\param)\likedens({\bm y} | \param) \log\left( \frac{1}{\tilde{C}}r(\param)\likedens({\bm y} | \param)\right) \ d \priormeas.
\end{eqnarray}
By collecting certain terms above, we can also express~\eqref{eq:kl_pop_post_pri} in terms of the standard Bayesian posterior as
\begin{equation}\label{eq:KLpostup}
    \text{KL}(\postupdens || \priordens) = \int_{\pspace} \frac{C}{\tilde{C}}r(\param)\postdens(\param) \log\left( \frac{\frac{C}{\tilde{C}}r(\param)\postdens(\param)}{\priordens(\param)}\right) \ d \mu_{\pspace},
\end{equation} 
which is similar to the KL divergence from the prior to the standard posterior,
\begin{equation*}
    \text{KL}(\postdens || \priordens) = \int_{\pspace} \postdens(\param) \log\left( \frac{\postdens(\param)}{\priordens(\param)}\right) \ d \mu_{\pspace},
\end{equation*}
but we see that~\eqref{eq:KLpostup} contains an additional multiplier term, namely, $\frac{C}{\tilde{C}}r(\param)$.
It is worth emphasizing that using the KL divergence as a measure of information gain of the posterior from a prior or initial density does not ensure that for every realization of individual-level data the resulting population-informed priors produce an increase in information gain in comparison to standard Bayesian inference.
In fact, even in the linear-Gaussian case, it is possible that for 
some realizations of individual-level data, population-informed priors result in decreased information gain measured by the relative difference in KL divergences given as 
\begin{eqnarray}\label{eq:rel_inf}
    \frac{\text{KL}(\postupdens || \initdens)- \text{KL}(\postdens || \initdens)}{\text{KL}(\postdens || \initdens)}.
\end{eqnarray}
To understand this phenomenon, consider the KL divergence between two arbitrary Gaussian random variables, $p \sim \mathcal{N}\left({\bm m}_p, {\bm \Gamma}_p\right)$ and $q \sim \mathcal{N}\left({\bm m}_q, {\bm \Gamma}_q\right)$ given as
\begin{eqnarray}\label{eq:KL_analyt}
    \text{KL}(p || q) = \frac{1}{2} \left[ \log \frac{|{\bm \Gamma}_q|}{|{\bm \Gamma}_p|} - k + ({\bm m}_p - {\bm m}_q)^\top {\bm \Gamma}_q^{-1}({\bm m}_p - {\bm m}_q) + \text{Tr}\left({\bm \Gamma}_q^{-1}{\bm \Gamma_p}\right)\right],
\end{eqnarray}
where $k = \text{dim}({\bm m}_p) = \text{dim}({\bm m}_q)$.
From~\eqref{eq:KL_analyt}, we see dependence of the divergence on both the mean and covariance associated with the random variables. 
In the context of Bayesian inference, this equates to dependence upon the posterior mean, which itself depends upon the realization of data used to solve the inverse problem (see~\eqref{eq:linGaus_mean} and~\eqref{eq:linGaus_pop_mean}).
Thus, even if the trace and determinant of the population-informed covariance is reduced, for some realizations of data, one may see a larger distance between the posterior and prior means (measured by the inner product $\langle {\bm m}_p - {\bm m}_q, {\bm m}_p - {\bm m}_q \rangle_{\Gamma_{q}^{-1}}$), which corresponds to a larger KL divergence for standard inference in comparison to population-informed inference.
However, the numerical results given in~\Cref{sec:computational_results} indicate that for an overwhelming majority of realizations of data, information gain increases when utilizing population-informed priors.
Furthermore, the expected (with respect to realizations of data) gain in information is larger for population-informed inference than standard inference.

\section{Computational Results}
\label{sec:computational_results}

We present computational examples that provide intuition for the properties of population-informed inference. 
The first two examples are used to numerically support the theoretical results of the prior section for the linear-Gaussian case while the third example involves a nonlinear structural mechanics model for an additive manufacturing problem that illustrates the effectiveness of population-informed inference in a nonlinear, non-Gaussian setting.

This section is organized as follows:
\Cref{sec:linGauss_comp_same} provides results for linear parameter-to-observable maps with Gaussian prior and likelihood models in the scenarios where the individual and population models are the same; \Cref{sec:linGauss_comp_diff} considers differing individual and population models. 
We conclude in~\Cref{sec:struc_mech_comp} with an additive manufacturing exemplar motivated by digital twin applications, where parameter-to-observable maps are nonlinear and dependent upon finite-element structural mechanics models.   

\subsection{The Linear-Gaussian Case with Identical Population and Individual Models}
\label{sec:linGauss_comp_same}
Consider first a scenario where the population and individual models (i.e., $f_p$ and $f_i$) are the same. 
Such a scenario can occur, for instance, when population data are aggregated from prior measurements of individuals using the same type of individual measurements that will be collected in the future. 
To make this concrete, suppose the individual and population parameter-to-observable maps are both given by
\begin{eqnarray}\label{eq:comp_ind_mod}
    f_i(\bml) = f_p(\bml) = {\bm B}\bml = \begin{bmatrix}
        2 & -1 
    \end{bmatrix} \begin{bmatrix}
        \lambda_1 \\ \lambda_2
    \end{bmatrix}.
\end{eqnarray}
Assume the data used to solve the Bayesian inverse problem are obtained via
\begin{eqnarray}\label{eq:stat_linGauss}
    y = {\bm B}\bml + \eta, \quad \eta \sim \mathcal{N}\left(0, 0.1
\right).
\end{eqnarray}
Assume that the initial distribution on the uncertain parameters, which will also be utilized in a standard Bayesian inference, is given by
\begin{eqnarray}\label{eq:com_prior}
    \pi_{\Lambda}^{\text{init}}(\bml)  \sim \mathcal{N}\left(\begin{bmatrix}
        0.4 \\ 0 
    \end{bmatrix}, \begin{bmatrix}
        0.15 & 0.0 \\ 0.0 & 0.15
    \end{bmatrix}
\right).
\end{eqnarray}
We explore the construction of a population-informed prior by solving the data-consistent inverse problem.

Note that synthetic data is generated according to~\eqref{eq:stat_linGauss}. 
Since the aim is to investigate the relative information gain over realizations of likely individual data, we specify a \textit{true} population-generating distribution $\popgendens$
from which to sample ${\bm \lambda}$. 
We note that when simulating likely data to analyze Bayesian inverse problem results, one often samples the prior.
Since our aim is to compare the performance of population-informed priors versus standard prior specification, we specify a \textit{true} distribution simply to serve as a construct for generating synthetic data.
In practice, this true distribution would not be known nor required as one would be given data on the individual with which to solve the Bayesian inverse problem.
Let 
\begin{eqnarray}\label{eq:comp_pop_gen}
    \popgendens(\bml) \sim \mathcal{N}\left(\begin{bmatrix}
    0.2 \\ 0.3
\end{bmatrix}, \begin{bmatrix}
    0.06 & 0 \\ 0 & 0.06
\end{bmatrix}
\right),
\end{eqnarray}
which results in an 
an observed density of
$\obsdens(f_p(\bml)) \sim \mathcal{N}\left(0.1, 0.3\right)$.
With this setup, and using~\eqref{eq:cov_up} and~\eqref{eq:mu_up}, the Gaussian updated density (which serves as the population-informed prior) is given by
\begin{eqnarray}\label{eq:comp_pop_pri}
\updens(\bml) \sim \mathcal{N}\left(\begin{bmatrix}
    0.12 \\ 0.14
\end{bmatrix}, \begin{bmatrix}
    0.078 & 0.036 \\ 0.036 & 0.132
\end{bmatrix}
\right).
\end{eqnarray}

\begin{table}
\begin{center}
\begin{tabular}{lll}
\hline
    & Determinant & Trace  \\ \hline
Standard            & 377.8         &    63.3  \\ \hline
Population-informed & {\bf 444.4}   & {\bf 73.3}       \\
\hline
\end{tabular}
\caption{Comparison of the determinant and trace for the inverse covariance matrices corresponding to the standard Bayesian posterior versus the population-informed posterior for the linear-Gaussian case with identical population and individual models.}
\label{tab:lingauss}
\end{center}
\end{table}

Since the covariances of the standard and population-informed inferences (and thus, their traces and determinants) do not depend on the realization of the individual data, we can immediately compare these covariances as a means of analyzing the increase in information gained through the use of the updated distribution as a population-informed prior.
To that end, we use~\eqref{eq:linGaus_cov} and~\eqref{eq:linGaus_pop_cov} to compute the covariances of the standard and population-informed posteriors and summarize in Table~\ref{tab:lingauss} the determinants and traces of the inverse covariances.
As guaranteed in~\Cref{sec:proof_linGaus}, the population-informed inverse covariance has a larger determinant and trace in comparison to the standard Bayesian inverse covariance.
In this case, the differences range roughly from 16-18\%;
note that large differences are not expected since the population and individual models being identical implies that each model is informing the same directions in the parameter space.
Thus, using the updated density as the population-informed prior simply gives the inference process a ``head-start''.

We now explore how population-informed inference compares to standard inference across multiple realizations of individual-level data.
We again utilize~\eqref{eq:com_prior} and generate $1\text{e}5$ realizations of individual data using~\eqref{eq:stat_linGauss} and the true population-generating distribution $\popgendens$.
We then solve the standard Bayesian and population-informed inference problems for each realization of data.
\Cref{fig:KL_lin_gauss} provides a histogram of the relative gain in information, measured by the KL divergence, from utilizing population-informed priors.
These results are computed utilizing~\eqref{eq:rel_inf}.
\begin{figure}[h]
    \centering
    \includegraphics[width=0.74\textwidth]{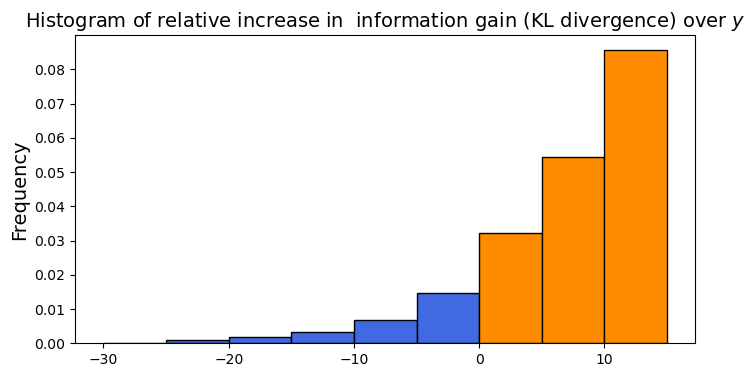}
    \caption{Histogram of the relative increase in KL divergence from the posterior to the prior (computed according to~\eqref{eq:rel_inf})
    for population-informed inference across $100,000$ realizations of data.}
    \label{fig:KL_lin_gauss}
\end{figure}
We note that in~\Cref{fig:KL_lin_gauss}, there are some realizations of data ($ < 7\%$) for which population-informed inference results in a decrease in the information gain in comparison to standard Bayesian inference.
However, for the vast majority of realizations of data ($>93\%$), information gain increases.
Furthermore, the distribution of information gain clearly skews towards larger values, which indicates that larger information gains occur at higher frequencies in this example. 

To better understand why some realizations of individual-level data lead to a reduction in the information gain, consider~\Cref{fig:out_of_dist}, which depicts the predicted (i.e., the push-forward of the initial through $f_i$) and the observed densities.
The green dots in~\Cref{fig:out_of_dist} indicate the individual data that led to a  decrease in information gain when utilizing the population-informed prior.
\begin{figure}[h]
    \centering
    \includegraphics[width=0.7\textwidth]{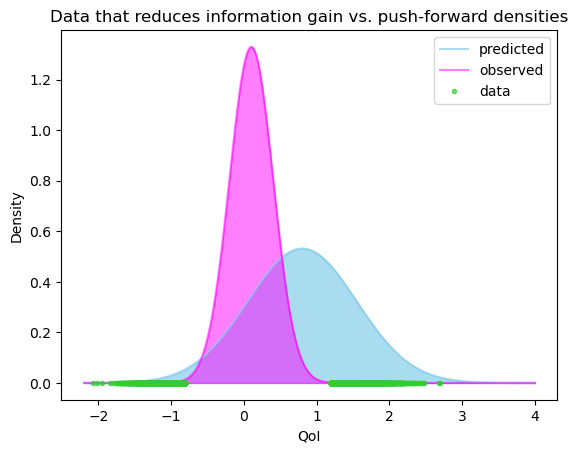}
    \caption{Push-forward of both the standard prior~\eqref{eq:com_prior} and the population-informed prior~\eqref{eq:comp_pop_pri} through the QoI defined by the individual model given in~\eqref{eq:stat_linGauss} versus the individual data that result in decreased information gain (computed according to~\eqref{eq:rel_inf}) when utilizing population-informed priors.} 
    \label{fig:out_of_dist}
\end{figure}
We see that data that are at the tail-ends of the observed distribution (i.e., the individual data we are least likely to observe), correspond to the cases where the population-informed prior negatively biases the resulting Bayesian inference.
Since the individual and population maps coincide (i.e., $f_p = f_i$), it is straightforward to assess the likelihood of such negative bias prior to computing the population-informed prior by comparing the individual data to the observed density.
This is analogous to determining if data are well supported by looking at a prior predictive in a standard Bayesian inverse problem.
Should the maps not coincide (i.e., $f_p \neq f_i$), one could 
predict negative bias after estimating the population-informed prior by comparing the individual data to the push-forward of the updated density.
This represents a trade-off where such a test would prevent subsequent Bayesian inferences from utilizing an improper prior at the computational cost of first solving the DCI problem.

\subsection{The Linear-Gaussian Case with Differing Population and Individual Models}
\label{sec:linGauss_comp_diff}
We now consider a scenario where the population and individual models differ, i.e., $f_p\neq f_i$.
Such scenarios can occur, for example, when  different experiments are performed on the population compared to the individuals leading to different data collection efforts.
To make this concrete, suppose the population parameter-to-observable map is given as
\begin{eqnarray}\label{eq:comp_pop_mod}
    f_p(\bml) = {\bm A}\bml = \begin{bmatrix}
        1 & 3
    \end{bmatrix} \begin{bmatrix}
        \lambda_1 \\ \lambda_2
    \end{bmatrix}.
\end{eqnarray}
For the individual parameter-to-observable map and corresponding statistical model on the data, we use~\eqref{eq:comp_ind_mod} and~\eqref{eq:stat_linGauss}, respectively. 
With the initial distribution, population-generating distribution, and corresponding observed densities given in~\Cref{sec:linGauss_comp_same} (see~\eqref{eq:com_prior} and~\eqref{eq:comp_pop_gen}), respectively, the resulting updated density is given as 
\begin{eqnarray}\label{eq:comp_pop_pri_diff}
\updens(\bml) \sim \mathcal{N}\left(\begin{bmatrix}
    0.37 \\ -0.09
\end{bmatrix}, \begin{bmatrix}
    0.138 & -0.036 \\ -0.036 & 0.042
\end{bmatrix}
\right).
\end{eqnarray}


With the goal of comparing the standard Bayesian posterior with the population-informed posterior computing using~\eqref{eq:comp_pop_pri_diff}, we consider one realization of individual data simulated according to~\eqref{eq:stat_linGauss}.
In the linear-Gaussian setting, the posterior covariance does not depend upon the realization of data, but for visualizing the contours of the posterior, we set $y \approx 0.39$ when solving the Bayesian inference problem.
\Cref{fig:stand_v_pop_post_linGauss} plots the
posterior contours corresponding to standard inference (left) versus population-informed inference (right), where the covariance matrices are
\begin{eqnarray}\label{eq:com_post_covs}
    \tilde{\bm \Gamma}_{\text{post}} \approx
    \begin{bmatrix}
         0.0218 & 0.00644 \\  0.00644 & 0.0265 
    \end{bmatrix}
    \quad \text{and} \quad
    {\bm \Gamma}_{\text{post}} \approx
    \begin{bmatrix}
         0.044 & 0.0529 \\  0.0529 & 0.124
    \end{bmatrix}.
\end{eqnarray} 
%
As supported by~\Cref{tab:lingauss2}, utilizing the population-informed prior results in significant increases to both the trace and determinant of the inverse covariance matrix. 
This was largely due to population and individual models informing different directions in the parameter space.
Moreover, both \Cref{fig:stand_v_pop_post_linGauss} and the covariances given in~\eqref{eq:com_post_covs} indicate that leveraging population data provides a greater reduction in posterior uncertainty, especially with respect to $\lambda_2$. 
This reduction results from the fact that in the Bayesian inverse problem, the parameters $[\lambda_1, \lambda_2]$ are not mutually identifiable; from~\eqref{eq:comp_ind_mod}, one can understand that each value of $f_i$ corresponds to a contour of values in the parameter space.
Thus, leveraging population data to construct a population-informed prior constrains dimensions of the parameter space complementary to this contour, leading to an overall reduction in parameter uncertainty when solving the Bayesian inverse problem. 

\begin{table}
\begin{center}
\begin{tabular}{lll}
\hline
    & Determinant & Trace  \\ \hline
Standard            & 377.8         &    63.3  \\ \hline
Population-informed & {\bf 1862.2}   & {\bf 90.0}       \\
\hline
\end{tabular}
\caption{Comparison of the determinant and trace for the inverse covariance matrices corresponding to the standard Bayesian posterior versus the population-informed posterior for the linear-Gaussian case with differing population and individual models.}
\label{tab:lingauss2}
\end{center}
\end{table}

\begin{figure}[h]
    \centering
    \includegraphics[width=0.48\textwidth]{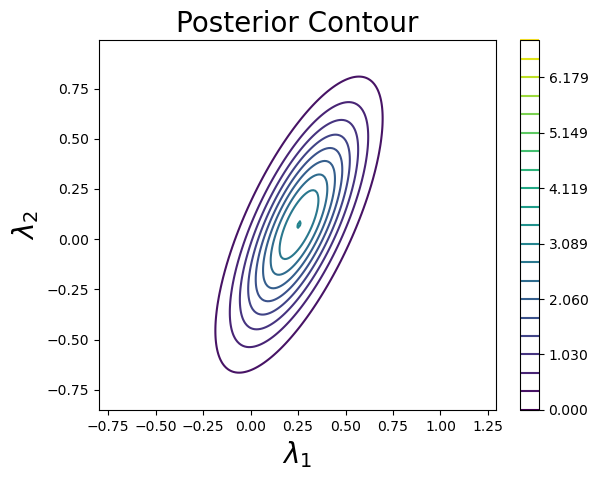}
    \includegraphics[width=0.48\textwidth]{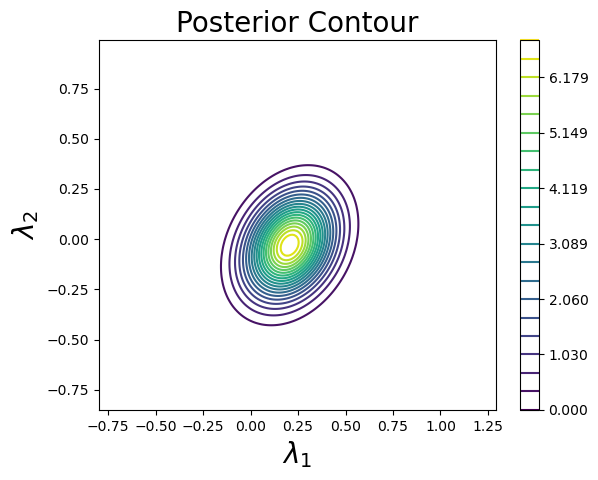}
    \caption{Contour plots of the Bayesian posterior corresponding to standard inference (left) versus population-informed inference (right) for a linear-Gaussian model with differing population and individual models and synthetically generated datum $y = 0.39$.} 
    \label{fig:stand_v_pop_post_linGauss}
\end{figure}

When looking at the relative increase in information gain resulting from population-informed inference, we find that across $100,000$ realizations of individual data (generated according to~\eqref{eq:stat_linGauss} where ${\bm \lambda} \sim \popgendens$), information gain always increases when utilizing population-informed priors. 
\Cref{fig:KL_lin_gauss_diff} depicts the histogram of relative increases in information gain across the realizations of data, where the expected (with respect to $y$) percent increase in information gain is approximately $53\%$.
\begin{figure}[h]
    \centering
    \includegraphics[width=0.74\textwidth]{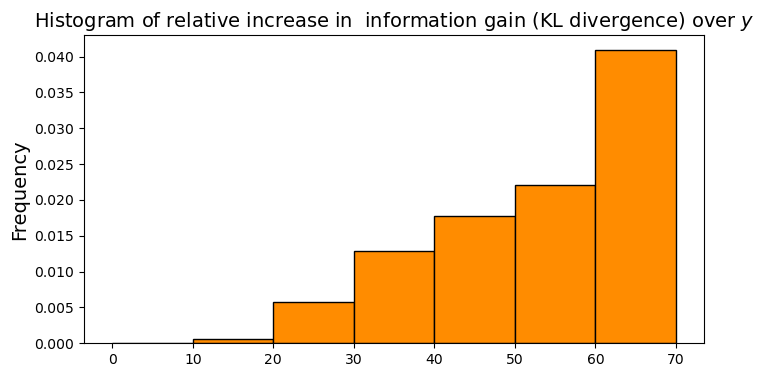}
    \caption{Histogram of the relative increase in KL divergence from the posterior to the prior (computed according to~\eqref{eq:rel_inf})
    for population-informed inference across $100,000$ realizations of data using a linear-Gaussian model with different population and individual models.}
    \label{fig:KL_lin_gauss_diff}
\end{figure}

Note that in many applications of inference, the Bayesian inverse problem is ill-posed.
Thus, even as the amount of individual data collected goes to infinity, the Bayesian posterior collapses only to a manifold for which the (individual) data is informative.
However, the results depicted by~\Cref{fig:stand_v_pop_post_linGauss,fig:KL_lin_gauss_diff} highlight how leveraging data on a related population -- resulting from experiments or measurements that differ from the individual -- can better constrain the Bayesian inverse problem and ultimately further reduce parameter uncertainty.

\subsection{Computational Results for Nonlinear Structural Mechanics Models}
\label{sec:struc_mech_comp}
We now present a computational study of an additive manufacturing exemplar motivated by digital twins that demonstrate the effectiveness of constructing population-informed priors even in the nonlinear, non-Gaussian setting for a larger-scale, real-world problem.
A digital twin provides a virtual representation of a physical asset, which can be updated with data allowing one to monitor the health and evolution of the asset over its lifespan. 
Such digital representations are often comprised of mathematical models, meshes, and parameters specific to the physical asset to enable accurate representations of the individual asset. 
Furthermore, digital twins are often used to represent fleets of similar individuals/assets, such as a field of wind turbines, a cohort of patients receiving healthcare, etc. 
One challenge in such applications is that data on an individual may be limited; thus, updating the digital representation (potentially in real-time) through Bayesian inference can be challenging. 
We demonstrate how leveraging data from the related assets can improve inference on the individual
for an additive manufacturing scenario bearing a problem structure similar to those of digital twin applications. 

The additive manufacturing problem considered herein is focused on the manufacturing of steel ``dog-bone'' structures at 
Sandia's Laser Engineered Near Net Shaping (LENS\textsuperscript{\tiny\textregistered}) facility~\cite{HECKMAN2020138632}.
Although one may expect the additively manufactured structures to be identical, there exist variations in material properties across individual structures in a given population of structures due to the manufacturing process.
The right plot of~\Cref{fig:ind_pop} illustrates the impact this inherent variability has on the behavior of individual structures during automated high-throughput tensile testing.
\begin{figure}[h]
    \centering
    \includegraphics[scale=0.5]{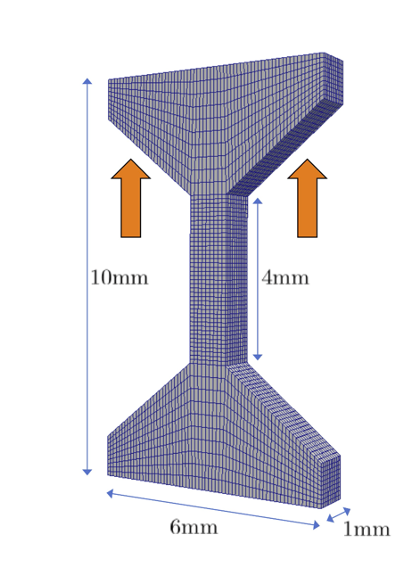}
    \includegraphics[width=0.48\textwidth]{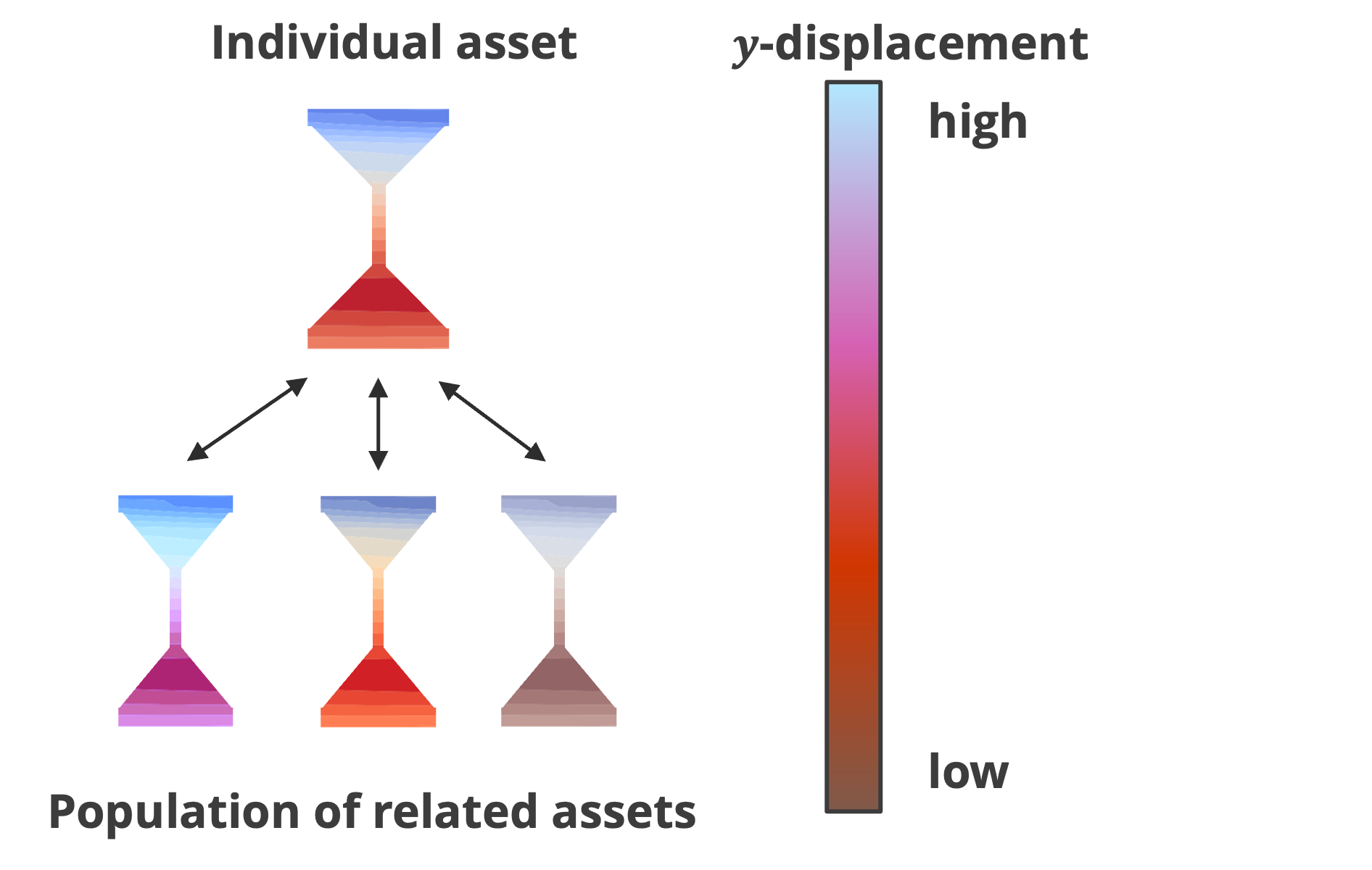}
    \caption{Digital representation of the dog-bone structure given by a finite element computational model (left) and depiction of population of dog-bone structures with varying average $y$-displacements under tensile loading (right).
    } 
    \label{fig:ind_pop}
\end{figure}
Here, we consider a digital representation of each dog-bone structure given by a finite element computational model of the structures, where we seek to infer the specific material properties determining the underlying linear elasticity models from observational data on the physical asset.
The left plot of~\Cref{fig:ind_pop} depicts the digital representation of a dog-bone structure given by a finite element computational model.

For this application, data collection on each individual are limited as the tests can be destructive, multiple tests cannot be performed in succession for an individual, nor can multiple quantities of interest (QoI) be measured simultaneously for an individual.
With this said, we investigate tensile testing that applies a load to the top portion of the dog-bone structure and records deformations in the $x$, $y$, and $z$ directions.  
For the sake of simplicity, we use synthetic data and model QoI in the linear elastic regime, i.e., we do not model fracture or failure of the structures.
Moreover, we use data on two QoIs to infer the Lam\'e parameters, Young's modulus $E$ and Poisson's ratio $\nu$. 
Specifically, the QoI used are the average $y$ displacement across the connection bar (thin bar connecting the top and lower segments of the dog-bone structure) as well as the sum of the average $x$ and $z$ displacements over the connection bar. 

The tensile loading of the dog-bone structures is modeled using the Multi-Resolution Hybridized Differential Equations (MrHyDE) software package developed at Sandia National Labs~\cite{MrHydeUser,MrHydeCopyright}.
Specifically, MrHyDE is used to generate data by solving the following linear elastic model on the 3-dimensional computational domain $\Omega$,
\[
\begin{cases}
-\nabla \cdot \bm{\sigma}({\bm u}) = \bm{f}, & x \in \Omega,\\
{\bm u} = \bm{g}_D, & x \in \Gamma_D \subset \partial \Omega, \\
\bm{\sigma}({\bm u})\cdot {\bm n} = \bm{g}_N, & x \in \Gamma_N \subset \partial \Omega, \\
\end{cases}
\]
where ${\bm u}$ represents displacements in the coordinate directions, ${\bm f}$ is the volumetric force, $\bm{g}_D$ is the fixed Dirichlet condition on a portion of the boundary, $\bm{g}_N$ is the force applied on the Neumann portion of the boundary, and
${\bm n}$ is the outward facing normal.
The constitutive relation is assumed to be linear and isotropic so that it can be expressed in terms of the Lam\'{e} parameters:
\[\bm{\sigma}({\bm u}) = \lambda (\nabla \cdot {\bm u}) \mathbb{I} + 2\mu \epsilon({\bm u}), \quad \epsilon({\bm u}) = \frac{1}{2}(\nabla {\bm u} + \nabla {\bm u}^T ).\]

Here, we model the individual data as \begin{eqnarray*}
    f_i({\bm \lambda}) = \frac{1}{C_{\text{bar}}} \int_{I_{\text{bar}}} \left( |u_x| + |u_z|\right) d \mathbf{x},
\end{eqnarray*}
where $u_x$ and $u_z$ are the $x$ and $z$ components of the displacement vector respectively, $I_{\text{bar}}$ is the bar region defined by $-0.05 \leq y \leq 0.05$ with $-0.0254 \leq x \leq 0.0254$ and $-0.0254 \leq z \leq 0.0254$, and $C_{\text{bar}} = 6.4516$E-5 is the volume of the bar. 
The data from an individual dog bone satisfies 
\begin{eqnarray}\label{eq:ind_data_db}
    y = f_i(\bml) +  \eta, \quad \eta \sim \mathcal{N}\left(0, \sigma^2_{\text{noise}}
\right),
\end{eqnarray}
where $\sigma_{\text{noise}} = 4.15\text{e-}07$.
Note that in problems where the noise variance cannot be determined experimentally, it is straightforward to estimate the variance alongside model parameters using both standard and population-informed priors.
Known properties of steel are used to construct the the prior/initial, yielding
\begin{eqnarray}\label{eq:dog_bone_pri}
    \pi_{\Lambda}^{\text{init}}(\bml) &=&= \pi_{\Lambda}^{\text{init}}(E) \times \pi_{\Lambda}^{\text{init}}(\nu)
    = \mathcal{U}_{\left[180, 210\right]} \times \mathcal{U}_{\left[0.25, 0.35\right]},
\end{eqnarray}
where we use the standard relationships to map $E$ and $\nu$ to $\lambda$ and $\mu$:
\[
  \lambda = \frac{E\nu}{(1+\nu)(1-2\nu)}, \quad
  \mu = \frac{E}{2(1+\nu)}.
\]
The displacements of the dog-bone structures are fixed to be zero on the bottom boundary and an upward force is applied on the sides as shown in Figure~\ref{fig:ind_pop}.
All other sides use stress-free boundary conditions, e.g., $\bm{g}_N=\bm{0}$, and no volumetric force is used, i.e., $\bm{f}=\bm{0}$.
Additionally, the finite element discretization based on tri-linear conforming finite elements with a mesh containing $63,362$ hexahedral elements is used.
The displacement fields and the von Mises stress are shown in Figure~\ref{fig:dog_bone_nom} for one realization of the random material properties.
\begin{figure}[h]
    \centering
    \includegraphics[width=0.85\textwidth]{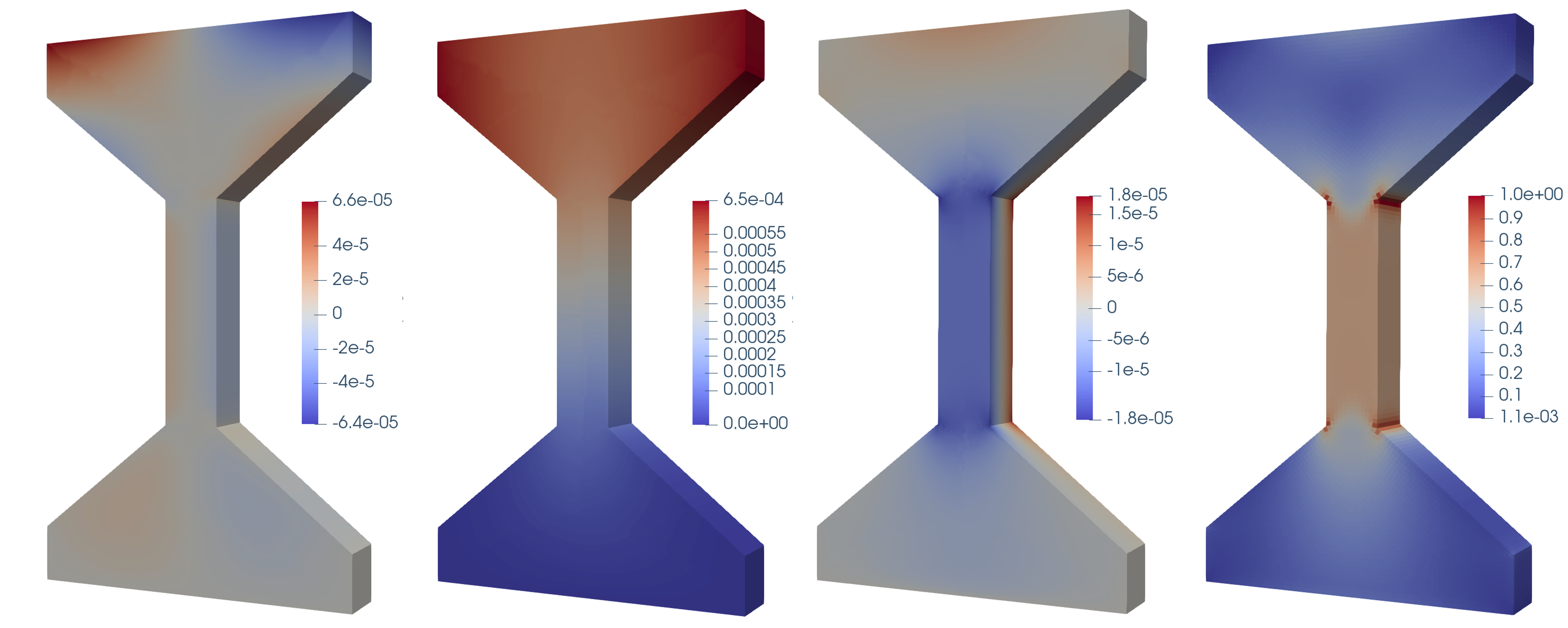}
    \caption{From left-to-right, the x-displacement, y-displacement, z-displacement, and von Mises stress for one realization of the dog-bone structures.
    } 
    \label{fig:dog_bone_nom}
\end{figure}

To solve the relevant inverse problems, we generate $4\text{e}04$  
realizations of the material properties and extract the corresponding QoI from the simulations.
Then, we perform the standard
Bayesian inference of an individual dog-bone structure using the prior defined in~\eqref{eq:dog_bone_pri} and experimental data simulated according to~\eqref{eq:ind_data_db}.
Here, a value of $y \approx 1.3\text{e-}05[\text{mm}]$ 
is used.
The middle plot of~\Cref{fig:dog_bone_tri_plot} depicts the resulting posterior distribution on the Lam\'e parameters. The individual data clearly informs Poisson's ratio, but the uncertainty regarding Young's modulus is not significantly
reduced.
\begin{figure}[h]
    \centering
    \includegraphics[trim={0 0 0 1cm},clip,width=0.98\textwidth]{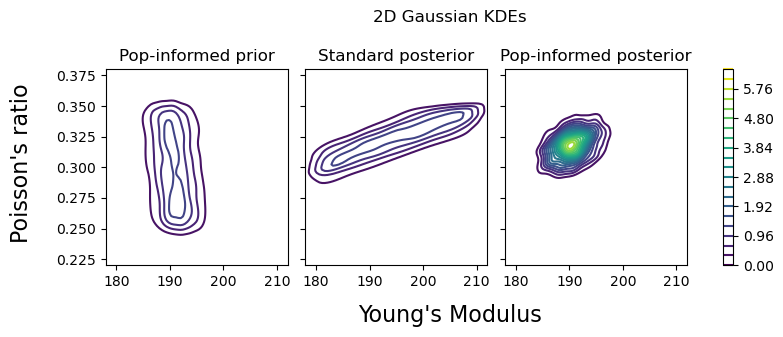}
    \caption{Contour plots of the population-informed prior (left), the standard Bayesian posterior (middle), and the population-informed posterior (right) all approximated using 
    kernel density estimation (KDE).
    } 
    \label{fig:dog_bone_tri_plot}
\end{figure}

Next we incorporate information from experiments performed on a population of related assets. Specifically, population data on the  average $y$ displacement over the connection bar is used to specify the
the observed density by $\pi_{\text{obs}}(f_p(\bml)) \sim \mathcal{N}\left(2.8\text{e-}04, 4.20\text{e-}06\right)$. 
Note that here we do not specify a true population-generating distribution as this would not be known in practice and is only necessary for synthetically generating a large number of realizations of individual data.
We then use DCI to estimate the population-informed prior depicted in the left plot of~\Cref{fig:dog_bone_tri_plot}.
Note that rejection sampling (see Algorithm~\ref{alg:alg_pop_post}) is used to generate samples from the updated density as closed-form expressions for the predicted and updated densities do not exist in this nonlinear, non-Gaussian setting.
The left plot of~\Cref{fig:dog_bone_tri_plot} demonstrates that the population data constrains  uncertainty in the Young's Modulus. Moreover, population-informed inference results in a significant increase in information gain, approximately $90\%$ as measured by~\eqref{eq:rel_inf}. This can be seen by comparing the middle and right plots of~\Cref{fig:dog_bone_tri_plot}.

\section{Conclusions}
\label{sec:conclusions}

In many applications of interest, data on an individual can be limited. For example, in additive manufacturing, experiments may be destructive, while in biomedical applications we may be limited in the diagnostic tests that can be performed on a single patient.
However, in such scenarios, there often exist populations of related individuals/assets for which similar experiments are conducted.
This work considered one approach for leveraging data from related assets to inform inference on the individual.
Specifically, we used data-consistent inversion (DCI) to estimate population-informed priors from data on a population of related assets. 
We then compared information gained in the Bayesian inverse
problem when utilizing population-informed priors versus standard prior specification, where information gain is measured in terms of the trace and determinant of the inverse posterior covariance as well as the KL divergence of the posterior from the prior. 

For linear-Gaussian scenarios, we proved population-informed priors will always lead to increased information gain as measured by the trace and determinant of the inverse posterior covariance; this result holds regardless of whether the individual and population parameter-to-observable maps differ.
Additionally, when measuring information gain in terms of the KL divergence, we demonstrated numerically that population-informed priors lead to increased information gain for a vast majority of realizations of data. 
Furthermore, the magnitude of information gain across realizations of data provided a positive expected gain in information.
Lastly, our numerical results based on an additive manufacturing exemplar demonstrated that our approach for constructing population-informed priors is able to inform dimensions of the parameter space not informed through standard Bayesian inference.

This work provides an enabling technology for digital twin systems by establishing rigorous mathematical theory for incorporating information from related assets into model predictions.
Consequently, this work could impact a variety of application spaces that possess populations of individuals/assets, including biomedical and engineering applications.
However, to broaden applicability of this approach, future work will address the computational challenges associated with DCI in high-dimensional output spaces and the development of optimal experimental design strategies for judiciously collecting population and individual data.

\section{Acknowledgements}\label{sec:acknowledgements}
T.~Butler's work is supported by the National Science Foundation under Grant No.~DMS-2208460.
T.~Butler's work is also supported by NSF IR/D program, while working at National Science Foundation. 
However, any opinions, findings, conclusions, or recommendations expressed in this material are those of the authors and do not necessarily reflect the views of the National Science Foundation.
T.~Wildey's work was supported by the U.S. Department of Energy, Office of Science,
Office of Advanced Scientific Computing Research, Early Career Research Program.
R.~White's and J.~Jakeman's work was supported by the Laboratory Directed Research Development (LDRD) program at Sandia National Laboratories.

This paper describes objective technical results and analysis. Any subjective views or opinions that might be expressed in the paper do not necessarily represent the views of the U.S. Department of Energy or the United States Government.
Sandia National Laboratories is a multimission laboratory managed and operated by National Technology and
Engineering Solutions of Sandia, LLC., a wholly owned subsidiary of Honeywell International, Inc.,
for the U.S. Department of Energy’s National Nuclear Security Administration under contract DE-NA-0003525.

\bibliographystyle{siam}
\bibliography{refs,jakeman-refs}

\end{document}